\newcommand{\TITLE}{Leader-Contention-Based User Matching for 802.11 Multiuser MIMO Networks}
\newcommand{\MYKEYWORDS}{Multiuser MIMO, User matching, Channel
orthogonality}
\newcommand{\name}{\textsf{MIMOMate}}
\newcommand{\xref}[1]{Section \ref{#1}}
\newcommand{\textred}[1]{\textcolor{red}{#1}}
	\newcommand{\pgwrapper}[2]{\textred{#1: #2}}
   \newcommand{\pgwrapper}[2]{}
\def\imagetop#1{\vtop{\null\hbox{\centerline{#1}}\vskip -0.15in}}
\theoremstyle{definition}
\newtheorem{probdefi}{Problem}
\theoremstyle{Theorem}
\newtheorem{thrm}{Theorem}
\begin{document}
%
\title{\TITLE}
%
%
\author{\IEEEauthorblockN{$^{\dag\ddag}$Tung-Wei Kuo, $^{\dag}$Kuang-Che
	Lee, $^{\dag}$Kate Ching-Ju Lin and
$^{\ddag}$Ming-Jer Tsai} \\
\IEEEauthorblockA{$^{\dag}$Research Center for Information Technology
Innovation, Academia Sinica, Taiwan\\
$^{\ddag}$Department of Computer
Science, National Tsing Hua University, Taiwan}
}
\bibliographystyle{IEEEtran}
\maketitle
\thispagestyle{plain}
\begin{abstract} 
In multiuser MIMO (MU-MIMO) LANs, the achievable throughput of a
client depends on who are transmitting concurrently with it. Existing
MU-MIMO MAC protocols however enable clients to use the traditional
802.11 contention to contend for concurrent transmission opportunities
on the uplink. Such a contention-based protocol not only wastes lots
of channel time on multiple rounds of contention, but also fails to
maximally deliver the gain of MU-MIMO because users randomly join
concurrent transmissions without considering their channel
characteristics. To address such inefficiency, this paper introduces
\name, a leader-contention-based MU-MIMO MAC protocol that matches
clients as concurrent transmitters according to their channel
characteristics to maximally deliver the MU-MIMO gain, while ensuring
all users to fairly share concurrent transmission opportunities.
Furthermore, \name\ elects the leader of the matched users to contend
for transmission opportunities using traditional 802.11 CSMA/CA. It
hence requires only a single contention overhead for concurrent
streams, and can be compatible with legacy 802.11 devices.  A
prototype implementation in USRP-N200 shows that \name\ achieves an
average throughput gain of 1.42x and 1.52x over the traditional
contention-based protocol for 2-antenna and 3-antenna AP scenarios,
respectively, and also provides fairness for clients.
\end{abstract}

\begin{IEEEkeywords}
\MYKEYWORDS
\end{IEEEkeywords}
\section{Introduction}\label{sec:intro}
With the growing technique of multiple antenna systems, the number of
antennas on an access point (AP) is increasing steadily. Most of
mobile devices, such as smartphones or tablets, are however limited by
their size and power constraints, and hence have a fewer number of
antennas as compared to the AP.  Traditional 802.11 protocols, which
enable only a single client to communicate with the AP, hence cannot
fully utilize concurrent transmission opportunities supported by a
multi-antenna AP.  To address this problem, recent work has advocated
developing multiuser MIMO (MU-MIMO) LANs~\cite{beamforming,
beamforming2, megamimo,SAM, TurboRate,JBM,Argos}  to enable multiple
clients to communicate concurrently with an AP and fully utilize all
the available {\em degrees of freedom}~\cite{TseV:05}.

\begin{figure}[t!]
\centering
{
\footnotesize
\begin{tabular}
	{>{\centering\arraybackslash}m{1.85in}>{\centering\arraybackslash}m{1.5in}}
\imagetop{\epsfig{file=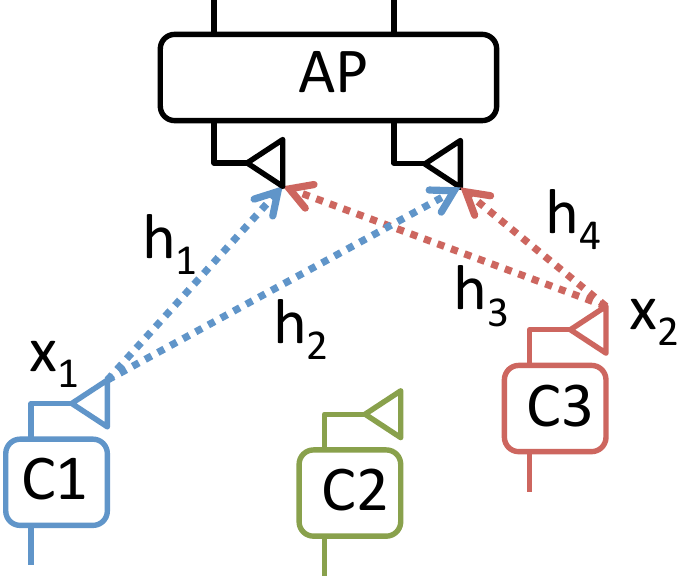, scale=.45}}
& 
\imagetop{\epsfig{file=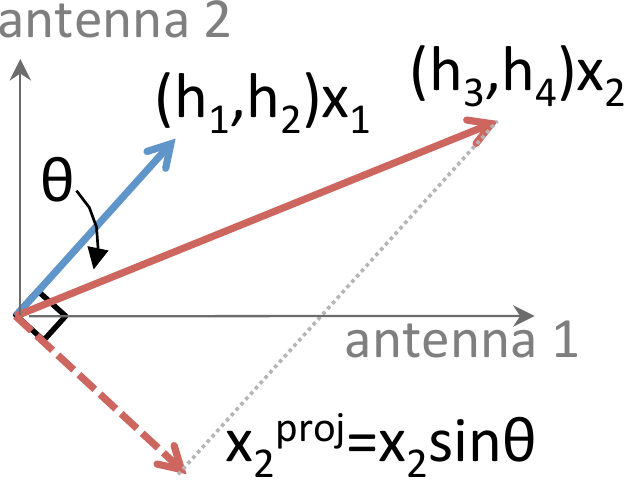, scale=.45}}
\\
(a) client C3 sends concurrently with C1
& 
(b) SNR after projection changes\\
\imagetop{\epsfig{file=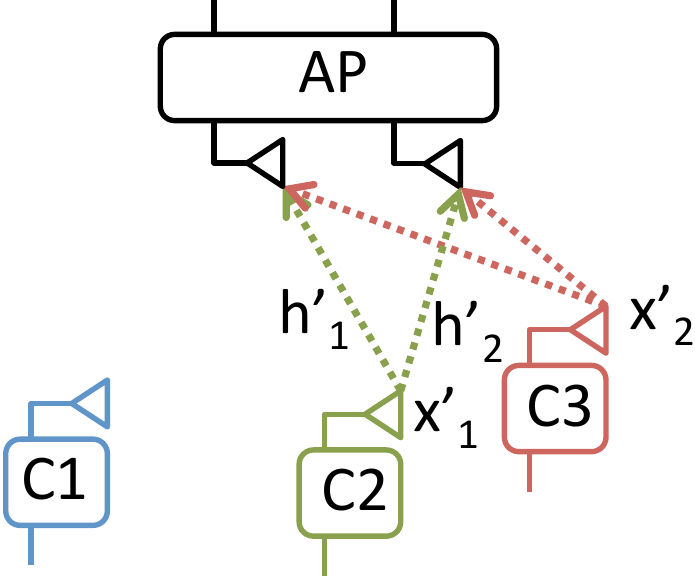, scale=.45}}
&
\imagetop{\epsfig{file=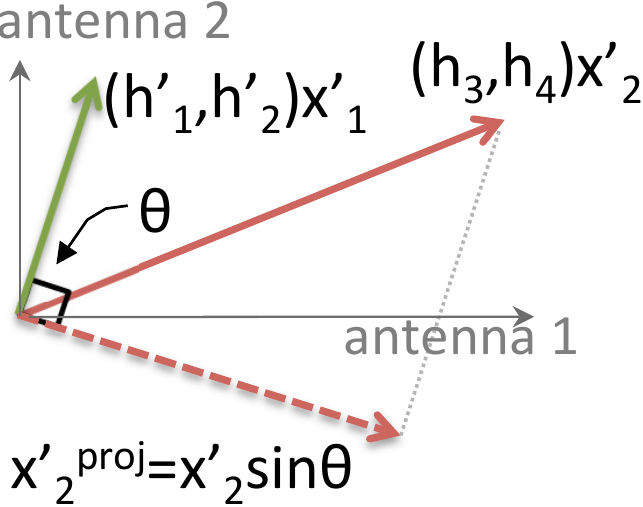, scale=.45}}
\\
(c) client C3 sends concurrently with C2
& 
(d) SNR after projection is larger than in (b)
\end{tabular}
}
\vspace{-12pt}
\caption{SNR after projection changes}
\label{fig:example}
\end{figure}

Though some MU-MIMO MAC protocols~\cite{beamforming, beamforming2,
megamimo,SAM, TurboRate, JBM, Argos} have been proposed to realize
concurrent transmissions across different nodes, in either the uplink
or downlink scenarios, they simply select a random subset of users to
communicate concurrently with an AP.  However, in a MU-MIMO LAN, the
achievable throughput of a client highly depends on who are
transmitting concurrently with it.  Consider the example in
Fig.~\ref{fig:example}(a), where three single-antenna clients contend
for communicating with a 2-antenna AP.  Say clients C1 and C3 win the
first and second contentions, respectively, and transmit concurrently
to the AP. The 2-antenna AP receives the signals in a 2-dimensional
antenna space, as shown in Fig.~\ref{fig:example}(b). The basic
approach for decoding the concurrent packets is called {\em
zero-forcing with successive interference cancellation}
(ZF-SIC)~\cite{TurboRate}\cite{TseV:05}. The AP first zero-forces (ZF)
the interference from client C1 by projecting the signal of the second
client C3 on a direction orthogonal to C1, and hence can decode C3.
The AP then uses interference cancellation (IC) to subtract C3's
signal and decode the first client C1.  We note that the second client
decoded by ZF however experiences SNR reduction after projection, as
shown in Fig.~\ref{fig:example}(b), while the first client decoded by
IC obtains the same SNR as it transmits alone.  The amount of SNR
reduction for the second client depends on channel orthogonality
between two concurrent clients, i.e., the angle $\theta$ between C1
and C3.  Consider another example in Fig.~\ref{fig:example}(c), where
clients C2 and C3 transmit the first and second streams, respectively.
Client C3 in this case gets a higher SNR after projection, as in
Fig.~\ref{fig:example}(d), and thus achieves a higher throughput, as
compared to sending concurrently with C1, because its channel is more
orthogonal to C2's channel. This example illustrates that random user
selection cannot efficiently deliver the gain of a MU-MIMO LAN.

To better deliver MU-MIMO gains, some theoretical works on {\em
downlink} MU-MIMO LANs have focused on allowing the single
transmitter, i.e., AP, to select a proper subset of users as
concurrent receivers. User selection in the {\em uplink} scenario is
however much more challenging because multiple contending clients
compete for transmission opportunities without coordination. As a
result, existing uplink MU-MIMO LANs~\cite{SAM}\cite{TurboRate} still
adopt traditional 802.11 contention to ask all the clients to contend
for concurrent transmissions without sophisticated user selection.
Such a contention-based scheme not only fails to select concurrent
transmitters according to their channel characteristics, but also
wastes a lot of channel time for multiple rounds of contention for
concurrent transmission opportunities.  For example, if multiple
single-antenna clients contend for transmitting to a 3-antenna AP,
they need to contend for three transmission opportunities
sequentially. If a client fails to win the first contention, it needs
another round of contention to contend for the opportunity of sending
the second stream. If it fails again, it needs to contend for
transmitting the third stream. The channel time occupied by such
sequential contention could significantly offset the gain of MU-MIMO
LANs.

Though some prior studies work on uplink MU-MIMO user scheduling, they
either maximize the sum rate of concurrent transmissions without
considering fairness~\cite{uplink-1}\cite{uplink-2}, or cannot be
compatible with legacy 802.11 contention-based MAC~\cite{multiround,
multiround2, ITC-Jung-12, uplink-6,uplink-7,uplink-3,uplink-4}. In
addition, all the above proposals do not give any formal model to
formulate the relationship between two conflict design goals, i.e.,
throughput and fairness. Hence, in this paper, we propose \name, a
{\em leader-contention-based} MU-MIMO MAC protocol that matches
concurrent transmitters according to their channel characteristics to
maximally deliver the MU-MIMO gain under the fairness requirement. Our
contributions are as follows:
\begin{itemize}
\item We first formulate, in Section~\ref{sec:match}, a rigorous model
	to formally define the user selection problem of maximizing
	throughput under the fairness constraint, and propose a matching
	algorithm to solve the problem.  We further show that \name's user
	matching algorithm can achieve the optimal solution for the
	2-antenna AP scenario.
\item We propose a MU-MIMO MAC design, in Section~\ref{sec:mac}, that
	integrates our proposed user matching algorithm with a {\em
	leader-based} contention scheme.  \name's leader-contention-based
	MAC is hence compatible with traditional 802.11, and, more
	importantly, requires only a single contention overhead not
	scaling up with the number of concurrent streams.
\item Unlike prior theoretical work that only mathematically analyzes
	the effect of MU-MIMO user selection, we build a prototype of
	\name\ using the USRP-N200 radio platform~\cite{usrp}, and use
	testbed measurements to understand the inefficiency of random user
	selection in real channels in Section~\ref{sec:measurement}.
\item We finally experimentally evaluate, in Section~\ref{sec:result},
	the performance of \name. The results show that \name\ achieves an
	average throughput gain of 1.42x and 1.52x over the
	sequential-contention-based protocol for 2-antenna and 3-antenna
	AP scenarios, respectively, and also provides users fair
	transmission opportunities.

\end{itemize}

The remainder of this paper is organized as follows.  We review
related works in Section~\ref{sec:related}.
Section~\ref{sec:measurement} measures how existing schemes fail to
deliver MU-MIMO gains and provide fairness in real channels.
Sections~\ref{sec:match} and~\ref{sec:mac} describe our \name\
algorithm and how to realize it as a MAC protocol, respectively.  In
Sections~\ref{sec:result} and~\ref{sec:simulation}, we evaluate the
performance of \name\ via experiments and simulations, respectively.
Finally, Section~\ref{sec:conclusion} concludes this paper.
\section{Related Work}\label{sec:related}
In the last few years, the advantage of MU-MIMO LANs has been verified
theoretically~\cite{theory-mumimo,theory-mumimo-2,theory-mumimo-3} and
demonstrated
empirically~\cite{beamforming,beamforming2,megamimo,SAM,TurboRate,IAC,n+}.
In Beamforming~\cite{beamforming,beamforming2,megamimo}, a
multi-antenna AP uses the precoding technique to transmit multiple
streams to multiple single-antenna clients.  SAM~\cite{SAM} focuses on
the uplink scenario and allows multiple single-antenna clients to
communicate concurrently with a multi-antenna AP.
TurboRate~\cite{TurboRate} proposes a rate adaptation protocol for
uplink MU-MIMO LANs.  IAC~\cite{IAC} connects multiple APs through the
Ethernet to form a virtual MIMO node that communicates concurrently
with multiple clients.  802.11n$^+$~\cite{n+} enables concurrent
transmissions across different links. All the above practical MU-MIMO
systems leverage the traditional 802.11 content mechanism to share
concurrent transmission opportunities. In contrast, \name\ enables
clients with a better channel orthogonality to transmit concurrently.

Prior theoretical work on user selection in downlink MU-MIMO
LANs~\cite{downlink-1,downlink-2,downlink-3,downlink-4,downlink-5,CCNC-Magd-13}
selects the optimal subset of clients from those who have packets
queued in the AP to maximize the sum rate of concurrent transmissions.
The works \cite{downlink-3}\cite{downlink-5} further address the issue
of fairness. However, their solutions are designed for downlink
MU-MIMO, and cannot be easily applied in the uplink scenarios due to
the lack of a coordinator.

User selection in uplink MU-MIMO LANs~\cite{uplink-1}\cite{uplink-2}
requires the AP to explicitly coordinate between the clients for every
packet and select the optimal subset of clients to transmit at the
rate specified by the AP. Enabling coordination among clients for
uplink traffic however requires a significant signaling overhead.  Our
work differs from those user selection algorithms in that it matches
multiple potential subsets of concurrent transmitters to improve the
system throughput, but elects a leader from the matched users to
perform traditional 802.11 contention without coordination among
clients. Some previous works
\cite{SAM,multiround,multiround2,ITC-Jung-12} propose to use
multi-round contention to enable as many concurrent transmissions as
possible, while giving clients a fair opportunity to transmit
concurrent streams. SAM~\cite{SAM} proposes a preamble-counting
protocol, which allows each client to count the number of existing
streams and determine whether it can contend for sending a concurrent
stream in a distributed way. Multi-round
contention~\cite{multiround}\cite{multiround2} is proposed to let
clients send RTSs in multiple rounds of contention. Multiple clients
might send RTSs concurrently in each single round of contention, and
the AP then feedbacks a CTS to notify those clients that can transmit
concurrently. An asynchronous MAC protocol \cite{ITC-Jung-12} is
proposed to enable clients to independently start their concurrent
transmissions, i.e., without the need of starting concurrent
transmissions at the same time. It however relies on a control channel
to feedback who can join concurrent transmissions.  There are also
some papers that have considered channel orthogonality and fairness
jointly in uplink MU-MIMO~\cite{uplink-3,uplink-4,uplink-6,uplink-7}.
However, these solutions are heuristics without any formal performance
analysis, and also not compatible with the existing 802.11 standard.
Our work is the first that maximizes the throughput under the fairness
constraint, and is able to coexist with legacy 802.11 devices.
\section{MU-MIMO Background and Motivations} \label{sec:measurement}
Before describing our proposed protocol, we first use testbed
measurements in real channels to understand the limitation of the
existing MU-MIMO MAC protocols.  The measurement results also give us
an insight to the motivation of enabling user selection in a MU-MIMO
MAC.  We consider again the network in Fig.~\ref{fig:example}(a) where
two single-antenna clients communicate with a 2-antenna AP.  The
measurements are empirically performed using USRP-N200~\cite{usrp} on
a 10 MHz OFDM channel with 802.11 modulations and coding rates.  The
available bit-rates hence range from 3--27 Mb/s. The measurements are
designed to answer the following questions: 

\vskip 0.05in \noindent {\bf a) How often is a client unable to
transmit concurrently?} 
Recall that the second client's SNR reduces after projection, and the
amount of SNR reduction depends on the angle between its channel and
the channel of its concurrent transmitter, i.e., the first client.
This means that the throughput of the second client in a MU-MIMO
network depends on not only its own SNR but also channel orthogonality
between concurrent transmitters. To illustrate this point, in
Fig.~\ref{fig:angle-tput}, we analytically compute the throughput of
the second client decoded by ZF for the whole range of the
inter-client angle $\theta \in [0,\pi/2]$ when its original SNR at the
AP is 5, 10, 15, 20 and 25 dB, respectively.\footnote{We empirically
measure an SNR-throughput mapping table and map the SNR after
projection to the corresponding throughput.}

\begin{figure}[t!]
\centering
\epsfig{file=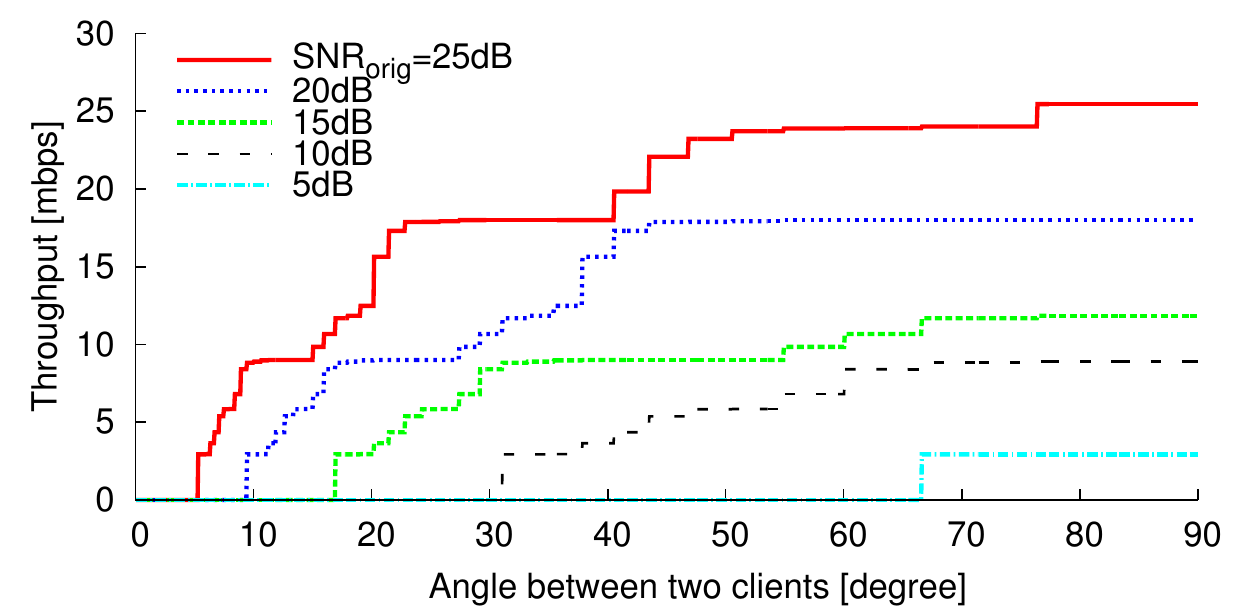, width=3in}
\vspace{-12pt}
\caption{Throughput after projection}
\label{fig:angle-tput}
\end{figure}

The figure shows that a small inter-client angle significantly reduces
the SNR after projection and hence the throughput.  If the client has
a high original SNR, e.g., 25~dB, by selecting the best bit rate
according to the SNR after projection~\cite{TurboRate}, it can still
get a relatively high throughput even after projection.  In contrast,
if the original SNR of the client is low, the client would be very
likely to get zero throughput even if it already uses perfect bit rate
adaptation. This is because a small SNR reduction could make its SNR
after projection become lower than the 802.11 operational SNR region,
i.e., 4~dB.  In particular, if the client's original SNR is 10 dB,
then for any angle smaller than 31 degree, its SNR after projection
drops below 4~dB, leading to zero throughput.  The situation is even
worse if the client's original SNR is only 5~dB.

One important thing worth noting is that the success of decoding the
first client using ZF-SIC relies on the AP being able to decode the
second client correctly. Otherwise, the AP cannot remove the
interfering signal of the second client, and hence also fails to
decode the first client. Also, after removing the interfering signal
from the second client, the first client can select its best bit-rate
according to its original SNR without considering who later joins the
concurrent transmission.  The above constraint hence requires the
second client to give up its transmission opportunity if its SNR after
projection is lower than the 802.11 operational SNR region.  This also
motivates why selecting a suitable subset of clients to transmit
concurrently is important for delivering the gain of MU-MIMO.

\vskip 0.05in \noindent {\bf b) How different is the throughput of a
client when it transmits concurrently with different clients?} We have
shown in Fig.~\ref{fig:angle-tput} that the throughput of a client
could change significantly with the inter-client angle.  We next check
whether the angle between the channels of two clients is actually
randomly distributed between $[0,\pi/2]$.  To validate this point, we
empirically measure how much throughput a client can achieve if it
transmits concurrently with different clients in a real testbed where
6 single-antenna clients, named Tx1-Tx6, contend for communicating
concurrently with a 2-antenna AP.  We repeat the experiment twice with
different random locations of the clients. Experiment 1 locates Tx1
close to the AP, while Experiment 2 locates Tx1 far from the AP. Two
experiments represent the scenarios when Tx1 has a high and low
original SNR, respectively.

\begin{figure}[t!]
\centering
{\footnotesize
\begin{tabular}{cc}
\hspace{-0.1in}
\epsfig{file=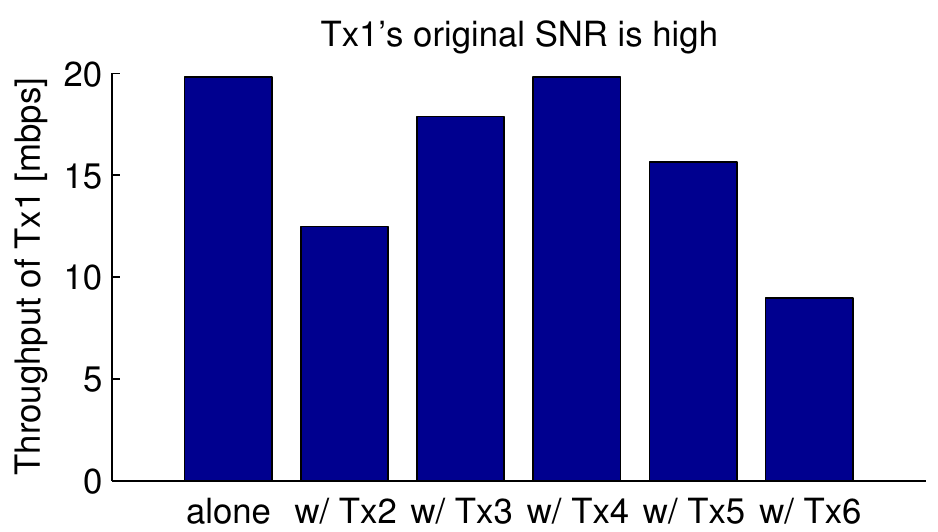, width=1.7in} & 
\hspace{-0.1in}
\epsfig{file=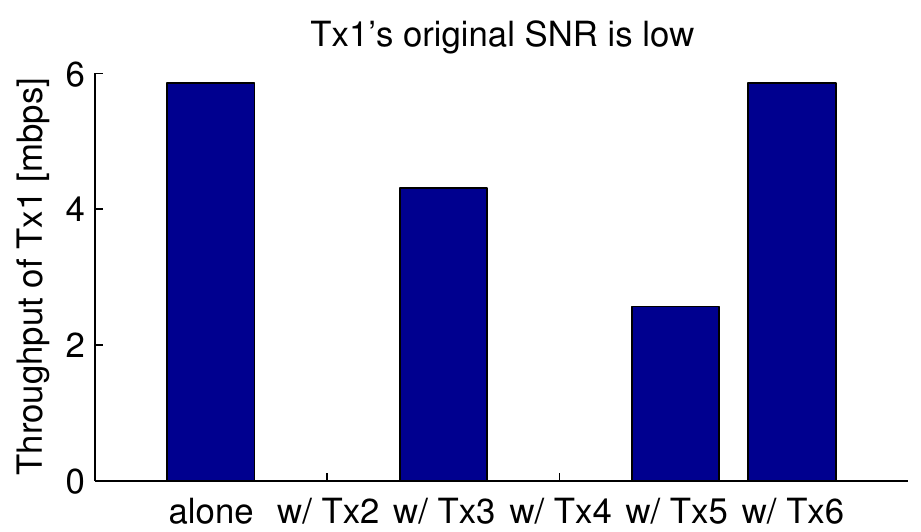, width=1.7in}\\ 
(a) Experiment 1 &
(b) Experiment 2
\end{tabular}
}
\vspace{-12pt}
\caption{Heterogeneous throughput in a MU-MIMO LAN}
\label{fig:diff-tput}
\end{figure}

Fig.~\ref{fig:diff-tput} plots the throughput of one client (denoted
by Tx1) when it transmits alone or when it transmits concurrently with
any of other five clients (denoted by Tx2--Tx6). The figure shows
that, in both experiments, as compared to transmitting alone, Tx1
usually gets a lower throughput when it joins the concurrent
transmission and is decoded by projection. In addition, the client's
throughput, as transmitting concurrently with different users, could
be very different.  For example, in experiment 1, Tx1 obtains a high
throughput when it transmits concurrently with Tx4, while suffering a
low throughput as joining Tx6's transmission. The situation becomes
worse when Tx1 has a low original SNR (as in experiment 2); in many
cases, it gets zero throughput as transmitting concurrently with
another client. These results are consistent with the analysis shown
in Fig.~\ref{fig:angle-tput}.  Thus, to get a high throughput, Tx1
would like to transmit with a client whose channel is more orthogonal
to its channel, as a result experiencing less SNR reduction. The
current random access protocols however do not consider this effect,
and hence cannot efficiently deliver the MU-MIMO gain.

\begin{figure}[t!]
\centering
{\footnotesize
\begin{tabular}{cc}
\hspace{-0.1in}
\epsfig{file=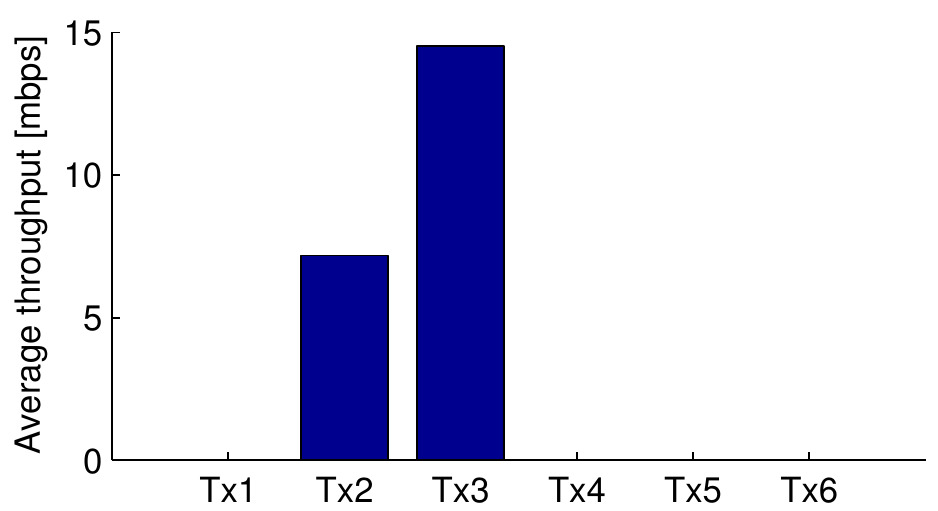, width=1.7in} & 
\hspace{-0.1in}
\epsfig{file=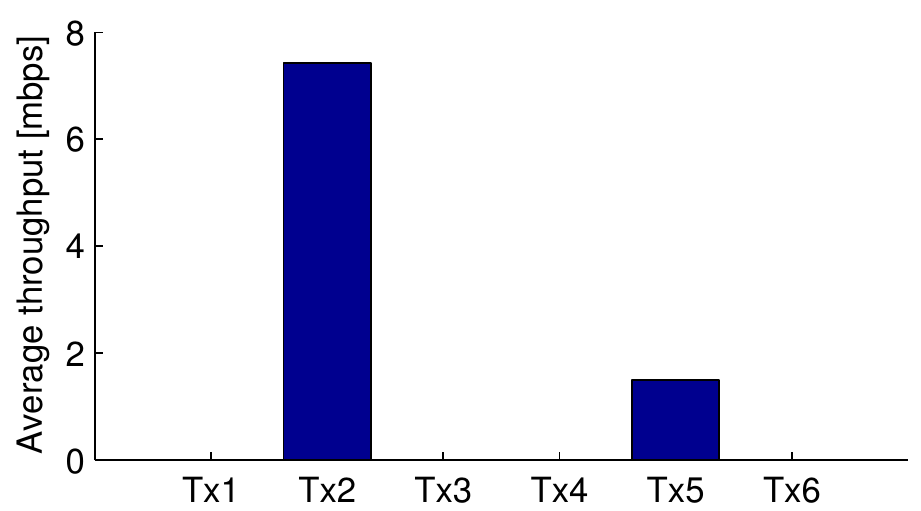, width=1.7in}\\ 
(a) Experiment 1 &
(b) Experiment 2
\end{tabular}
}
\vspace{-12pt}
\caption{Fairness of transmission opportunities
based on the throughput}
\vspace{-10pt}
\label{fig:fair-tput}
\end{figure}

\vskip 0.05in \noindent {\bf c) Is selecting concurrent clients to
maximize the throughput fair enough?}
A na\"{i}ve solution to improving the throughput is to
deterministically assign the client that achieves the highest
throughput to join the transmission of the first contention winner.
This simple solution however might not be fair for all the clients. To
see why this is a problem, we measure the throughput of each client in
the above scheme.  Specifically, in each experiment, we let each
client have an equal probability to transmit the first stream; given
the first contention winner, we then select the client that can
achieve the highest throughput after projection to transmit the second
stream.  Each experiment includes 1,000 rounds of contention, i.e.,
1,000 concurrent transmissions.

Because every client has an equal probability to win the first
contention, we do not consider the throughput of a client transmitting
using the first stream. Instead, in Fig.~\ref{fig:fair-tput}, we only
plot the average throughput of a client transmitting in the second
stream. The results of two experiments show that, by applying such a
na\"{i}ve solution, some clients, e.g., Tx1, Tx4, Tx5 and Tx6 in
experiment 1, do not have any opportunities to join concurrent
transmissions. Even worse, the clients with a low original SNR are
very likely to starve because they cannot compete with those clients
in the high SNR regime.  One might think, alternatively, we can assign
the client that has the largest inter-client angle with the first
contention winner to transmit concurrently. We repeat the same
experiment by applying the above assignment. The results in
Fig.~\ref{fig:fair-ang} show that this solution again fails to provide
fairness because some clients happen to have a small angle with all
the other clients and hence do not get any transmission opportunities.

\begin{figure}[t!]
\centering
{\footnotesize
\begin{tabular}{cc}
\hspace{-0.1in}
\epsfig{file=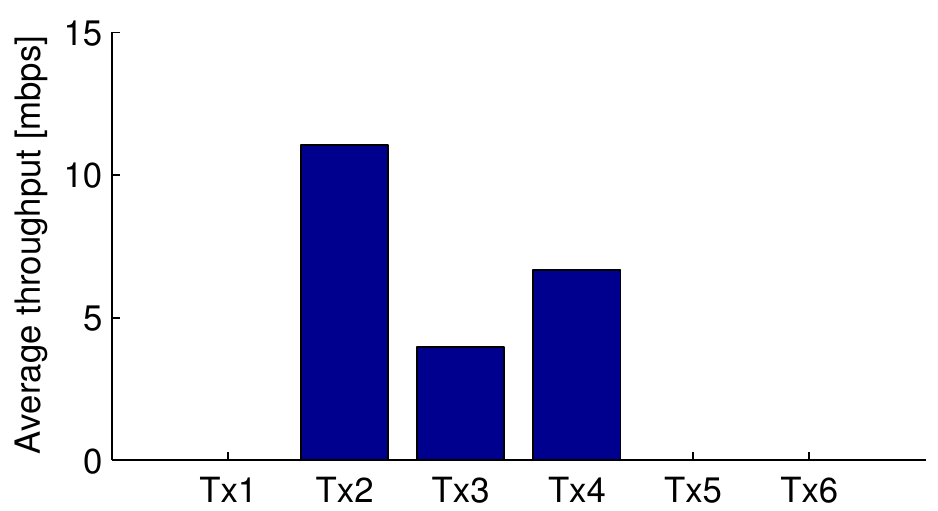, width=1.7in} & 
\hspace{-0.1in}
\epsfig{file=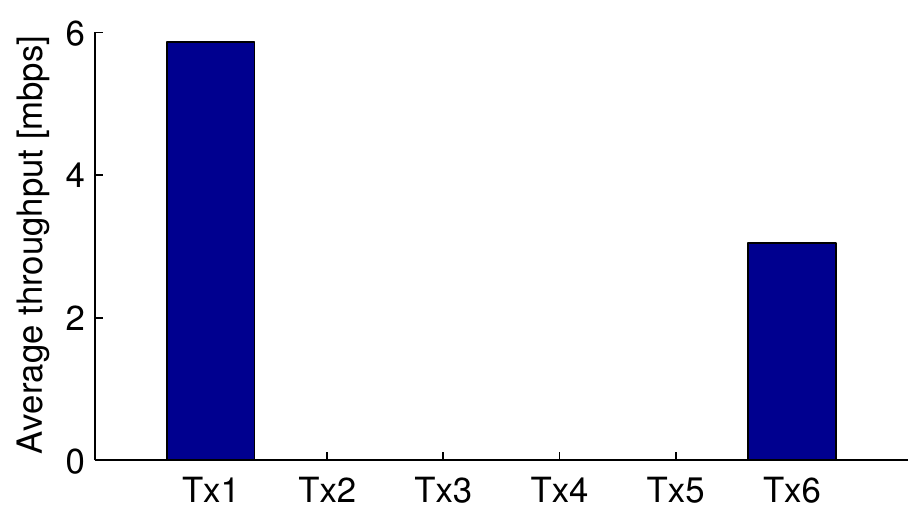, width=1.7in}\\ 
(a) Experiment 1 &
(b) Experiment 2
\end{tabular}
}
\vspace{-12pt}
\caption{Fairness of transmission opportunities
based on the angle}
\label{fig:fair-ang}
\end{figure}
\section{\name\ Matching}\label{sec:match}
Motivated by the above measurements, we aim at designing a matching
protocol, called \name, that pairs concurrent clients to deliver the
maximum throughput gain enabled by concurrent transmissions, while, at
the same time, providing clients fair concurrent transmission
opportunities.  For simplicity, we describe our \name\ protocol
assuming that multiple single-antenna clients communicate concurrently
with a multi-antenna AP in an uplink MU-MIMO LAN.  Our design however
can be generalized to clients with multiple antennas and downlink
MU-MIMO LANs.  We will describe in the next section how to realize
\name\ as a leader-contention-based MAC protocol.

\subsection{Overview}\label{sec:overview}
The goal of \name\ is to build chain relation in concurrent
transmissions. When a client wins contention, the following concurrent
transmissions are determined a priori. Hence, all the concurrent
transmitters only require to precede their streams with one contending
process. In particular, \name\ matches clients whose channels are more
orthogonal to each other as a group of concurrent transmitters with a
precedence relation, which is called {\em MIMO-mates}. To see how it
works, let us consider an example where two clients are allowed to
communicate concurrently with a 2-antenna AP. We match two clients as
{\em MIMO-Mates}, in which one is the lead and the other is the
follower.  When the lead of {\em MIMO-Mates} wins contention and
transmits the first stream, its follower transmits the second stream
concurrently immediately after it detects the transmission from the
lead. 

The protocol can be generalized to an $N$-antenna AP scenario where
$N$ clients can transmit concurrently.  We match $N$ clients as a {\em
MIMO-Mate} (precedence) relation $(u_1,u_2,\cdots,u_N)$ such that
clients $u_1, u_2, \cdots, u_N$ join the concurrent transmissions one
after another in order of precedence.  In particular, after the lead
$u_1$ wins contention, any of the following clients $u_i$ can count
the number of preambles to figure out the time that it should
transmit.  The above protocol benefits throughput gains from two
factors: 1) it matches clients with a higher channel orthogonality to
transmit concurrently and minimizes throughput reduction caused by
projection; 2) it requires only one contending process for concurrent
transmissions, as a result reducing the overhead significantly.
However, the benefit of \name\ might not be able to be fully delivered
when any of the matched {\em MIMO-Mates} does not have traffic to
send. In this case, the unused transmission opportunities should be
exploited by other clients to avoid waste.  We thus further integrate
\name\ with an angle-based contention mechanism, which will be
discussed in~\xref{sec:mac}.

\subsection{Problem Formulation}\label{sec:problem}
Our objective is to match clients as {\em MIMO-mates} in order to
maximize the throughput subject to the fairness constraint.  We first
define our problem in a 2-antenna AP scenario, and next extend it to a
3-antenna AP scenario and even a more general $N$-antenna AP scenario.

Let us first consider the 2-antenna AP scenario.  Say $u$ is the client
that wins the first contention, and $v$ is the follower of $u$, who
joins $v$'s transmission. We define $(u,v)$ as the {\em MIMO-mate}
relation of clients $u$ and $v$. Let $r^{(u,v)}_v$ denote the
throughput of $v$ as it transmits concurrently with $u$ and is decoded
by using ZF to project orthogonal to client $u$.  We note that client
$v$ might get a different throughput if it is assigned to follow a
different predecessor, i.e., $r^{(u,v)}_v$ could be different from
$r^{(u',v)}_v$ if $u \neq u'$.  The {\em MIMO-mate} matching problem in
a 2-antenna AP scenario can be defined as follows:

\begin{probdefi}\label{Probdefi: 2mimomate} ({\bf \em 2-MIMOMate})
Given a set of clients $V$ and the throughput $r^{(u,v)}_v$ for all
$u,v \in V$, the matching problem is to find a set $M \subseteq
V\times V$ such that
\begin{enumerate}
\item \label{Cndtn: 1-2mimomate} $r^{(u,v)}_v>0, \forall (u,v) \in M$,
\item \label{Cndtn: 2-2mimomate} $u \neq v, \forall (u, v) \in M$,
\item \label{Cndtn: 3-2mimomate} $u_1 \neq u_2$ and $v_1 \neq v_2$
	for any two distinct elements $(u_1, v_1), (u_2, v_2) \in M$,
\item \label{Cndtn: 4-2mimomate} $\vert M \vert$ is maximized,
\item \label{Cndtn: 5-2mimomate} $\sum_{(u,v) \in M} r^{(u,v)}_v$ is
	maximized among those $M$ satisfying Constraint~\ref{Cndtn:
	4-2mimomate}.
\end{enumerate} 
\end{probdefi}

To ensure the success of ZF-SIC decoding, Constraint~\ref{Cndtn:
1-2mimomate} allows a client to join the concurrent transmission only
if it can be successively decoded, i.e., getting a positive
throughput.  Constraints~\ref{Cndtn: 2-2mimomate}--\ref{Cndtn:
3-2mimomate} force each client to follow at most one of other clients,
and hence guarantee fairness.  The rationale of Constraint~\ref{Cndtn:
4-2mimomate} is that, since each client in traditional 802.11 has an
equal probability to win the first contention and transmit the first
stream, then, by finding the maximum set $M$, we allow as many clients
as possible to join the concurrent transmission.  This ensures clients
to also have a fair probability to transmit the second stream. Under
such a fairness constraint, our goal is to find a feasible solution
that maximizes the total throughput of the followers, i.e., the second
streams. Note that we are only interested in the throughput of the
followers because, by using ZF-SIC, a client who transmits the first
stream can get about the same throughput no matter who its follower
is~\cite{TurboRate}.

\begin{figure}[t!]
\centering
{\footnotesize
\begin{tabular}{cc}
\epsfig{file=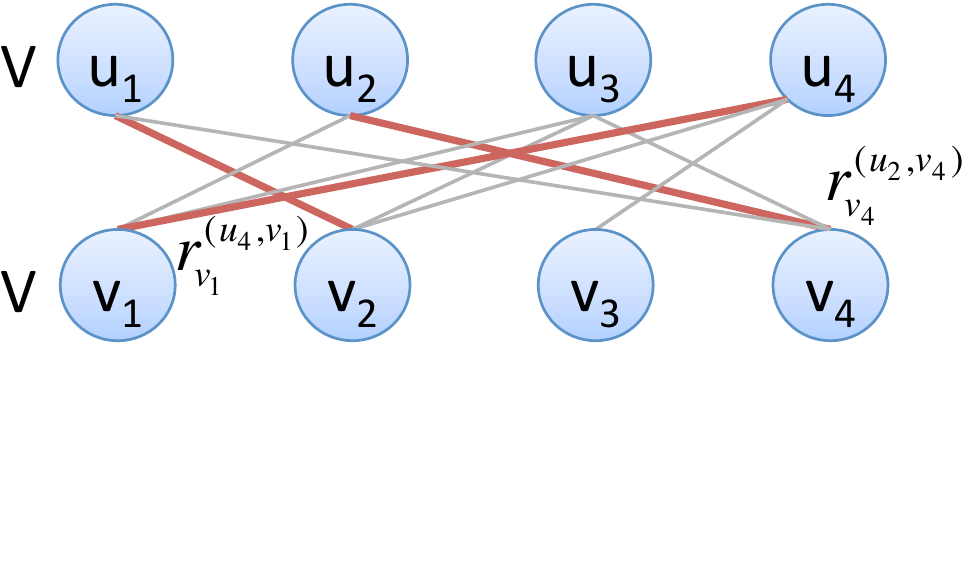, scale=.4} &
\epsfig{file=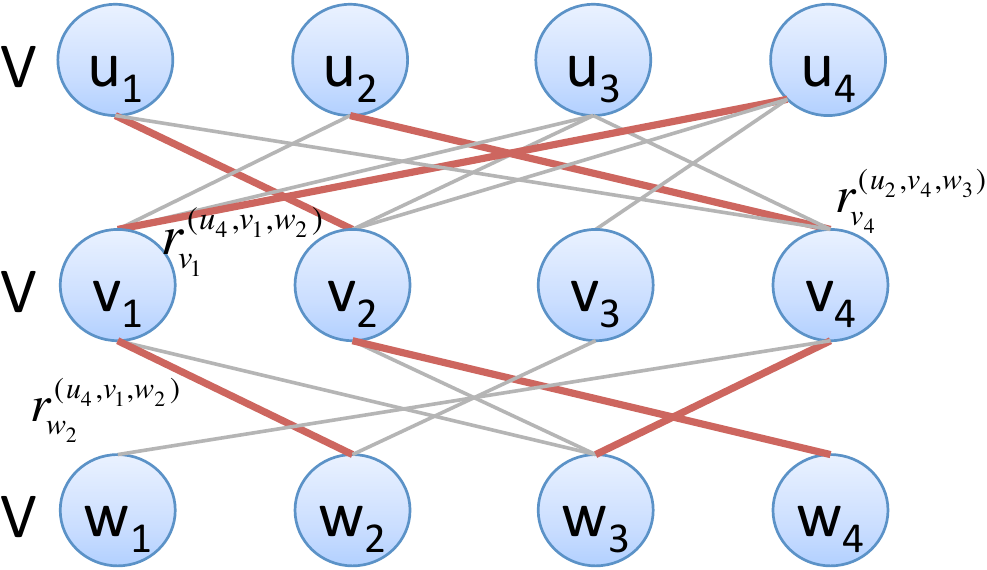, scale=.4}\\
(a) 2-{\em MIMOMate} matching &
(b) 3-{\em MIMOMate} matching 
\end{tabular}
}
\vspace{-12pt}
\caption{{\em MIMOMate} matching for a network with
4 clients}
\label{fig:graph}
\end{figure}

The above {\em 2-MIMOmate} problem can actually be illustrated as a
bipartite graph, as shown in Figure~\ref{fig:graph}(a). Each edge
$(u,v)$ is associated with a weight, which is set to the throughput of
$v$ when it follows $u$, i.e., $r^{(u,v)}_v$.  We observe that the
{\em 2-MIMOmate} problem is exactly equivalent to the {\em Bipartite
Maximum Weighted Maximum Cardinality Matching} problem, which finds a
maximum cardinality matching with maximum weight in a bipartite graph
and can be solved in polynomial time by the algorithm proposed
in~\cite{LEDA} (see Chapter 7.8). Note that, since \name\ always
assigns any leader a specific follower, any feasible solution of
Problem~\ref{Probdefi: 2mimomate} is therefore deterministic.  That
is, the relationship between the first contention winner and its
follower is fixed until we solve Problem~\ref{Probdefi: 2mimomate}
again when the channels change.  This is very different from the
probabilistic nature of \textit{uniformly random} contention used in
most of existing protocols~\cite{SAM,multiround, multiround2}, where
every client has the same probability to win the contention of sending
concurrently with a given first winner.  A natural question then
arises: Would choosing the follower uniformly randomly, e.g., via {\em
uniformly random} contention, results in a solution better than the
output of Problem~\ref{Probdefi: 2mimomate}?  We give the following
positive result on Problem~\ref{Probdefi: 2mimomate}:
\begin{thrm}
\label{thrm: mimomate_vs_contention}
If the throughput $r^{(u,v)}_v$ is greater than $0$ for all $u,v \in
V, u \neq v$, then the average throughput of Problem~\ref{Probdefi:
2mimomate}'s output is higher than or equal to the average throughput
achieved by using \textit{uniformly random} contention to choose the
follower, even if the contention overhead is ignored.
\end{thrm}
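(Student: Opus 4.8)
The plan is to interpret uniformly random contention as averaging the follower throughput over a whole family of feasible matchings, and then to compare that average against the optimum $M$ via an explicit edge decomposition.

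First I would fix notation. Write $n=|V|$, and recall that both schemes let each client win the first contention with probability $1/n$ and that the first-stream throughput is independent of the follower; hence only the follower (second-stream) throughput matters, and it suffices to compare the two schemes on that quantity. Because $r^{(u,v)}_v>0$ for every $u\neq v$, Constraint~\ref{Cndtn: 1-2mimomate} never excludes an edge, so the underlying bipartite graph is complete on leaders and followers with the diagonal removed. A single cyclic shift is already a perfect matching, so the maximum cardinality in Constraint~\ref{Cndtn: 4-2mimomate} equals $n$, and the output $M$ of Problem~\ref{Probdefi: 2mimomate} is a weight-maximal fixed-point-free permutation $\sigma^\ast$, i.e. $M=\{(u,\sigma^\ast(u)):u\in V\}$. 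With this, the two average follower throughputs are
\begin{align}
A_{\mathrm{MM}} &= \frac{1}{n}\sum_{u\in V} r^{(u,\sigma^\ast(u))}_{\sigma^\ast(u)}, \\
A_{\mathrm{rand}} &= \frac{1}{n}\sum_{u\in V}\frac{1}{n-1}\sum_{v\neq u} r^{(u,v)}_v = \frac{1}{n(n-1)}\sum_{u\in V}\sum_{v\neq u} r^{(u,v)}_v ,
\end{align}
so the theorem reduces to showing $(n-1)\sum_{u} r^{(u,\sigma^\ast(u))}_{\sigma^\ast(u)} \ge \sum_{u}\sum_{v\neq u} r^{(u,v)}_v$.

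The crux is the decomposition. Label the clients $0,1,\dots,n-1$ and, for each $k\in\{1,\dots,n-1\}$, define the cyclic shift $\sigma_k(i)=(i+k)\bmod n$. Each $\sigma_k$ is a bijection with no fixed point, so $M_k=\{(i,\sigma_k(i)):i\}$ satisfies Constraints~\ref{Cndtn: 1-2mimomate}--\ref{Cndtn: 3-2mimomate} and has cardinality $n$, hence is a feasible maximum-cardinality matching. Moreover the $M_k$ partition the off-diagonal ordered pairs: for any $u\neq v$ there is exactly one $k=(v-u)\bmod n\in\{1,\dots,n-1\}$ and exactly one index ($i=u$) with $(u,v)=(i,\sigma_k(i))$. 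Therefore $\sum_{k=1}^{n-1}\sum_{i} r^{(i,\sigma_k(i))}_{\sigma_k(i)} = \sum_{u}\sum_{v\neq u} r^{(u,v)}_v$.

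Finally I would invoke optimality. Because $\sigma^\ast$ maximizes the follower-throughput sum among all maximum-cardinality feasible matchings (Constraints~\ref{Cndtn: 4-2mimomate}--\ref{Cndtn: 5-2mimomate}) and each $M_k$ is such a matching, $\sum_{i} r^{(i,\sigma_k(i))}_{\sigma_k(i)} \le \sum_{u} r^{(u,\sigma^\ast(u))}_{\sigma^\ast(u)}$ for every $k$. Summing over the $n-1$ shifts and using the partition gives $\sum_{u}\sum_{v\neq u} r^{(u,v)}_v \le (n-1)\sum_{u} r^{(u,\sigma^\ast(u))}_{\sigma^\ast(u)}$, i.e. $A_{\mathrm{rand}}\le A_{\mathrm{MM}}$, which is the claim. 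The only genuinely nontrivial step is spotting that the off-diagonal edges split into exactly $n-1$ fixed-point-free perfect matchings, so that uniform random contention is literally the average of $n-1$ feasible solutions, each dominated by the optimum; the positivity hypothesis $r^{(u,v)}_v>0$ is precisely what guarantees that all of these matchings are admissible and that the optimum itself is a full derangement.
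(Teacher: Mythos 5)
Your proof is correct, and it takes a genuinely different route from the paper's. The paper does not prove Theorem~\ref{thrm: mimomate_vs_contention} directly: it proves the stronger Theorem~\ref{thrm: mimomate_vs_random} (any \emph{fair probabilistic assignment}, with possibly non-uniform follower distributions, is dominated) and obtains the uniform case as a corollary. Its machinery is polyhedral: it reduces maximum-weight maximum-cardinality matching to maximum-weight matching by adding a large constant $C$ to every edge weight, formulates the bipartite matching integer program, shows the matching output $p_M^{i,j}$ is an optimal integral solution and that any fair assignment $p_R^{i,j}$ is a feasible point of the LP relaxation, invokes integrality of the bipartite matching polytope, and then cancels the $C$ terms --- which requires an exchange argument showing the \emph{optimal} fair assignment saturates $\sum_{j\neq i} p_R^{i,j}=1$ so that both probability masses total $|V|$. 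Your argument replaces all of this with the elementary observation that the off-diagonal ordered pairs decompose into the $n-1$ cyclic-shift derangements $\sigma_k$, each a feasible maximum-cardinality solution whose weight is dominated by the optimum via Constraint~\ref{Cndtn: 5-2mimomate}, so uniform contention is literally an average of $n-1$ dominated feasible solutions; this avoids LP relaxation, the $C$ bookkeeping, and the saturation lemma entirely, and is self-contained. What you give up is generality: a non-uniform fair assignment is not a uniform average over your fixed family of shifts, so your decomposition as stated cannot yield Theorem~\ref{thrm: mimomate_vs_random}. It could be upgraded --- the optimal fair assignment is a zero-diagonal doubly stochastic matrix (after the paper's saturation argument), and Birkhoff--von Neumann decomposes it into a convex combination of derangements, each dominated by the optimum --- but that upgrade is essentially a re-derivation of the matching-polytope integrality the paper cites from~\cite{matching_theory}, so the two proofs would then converge. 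For the statement as posed, your proof is complete; the one point worth making explicit is the degenerate case $|V|=1$ (no follower exists under either scheme, so both averages coincide trivially), and that positivity of $r^{(u,v)}_v$ is used twice exactly as you say: to make every $\sigma_k$ feasible under Constraint~\ref{Cndtn: 1-2mimomate}, and to ensure no contender drops out so the uniform-contention average is over all $n-1$ candidates.
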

In fact, we can prove the following stronger theorem, which replaces
the \textit{uniformly random} contention by any probabilistic
assignment that obeys the fairness constraint.  More specifically, in
any fair probabilistic assignment, every client has the same
probability to transmit the second stream; however, any first winner
might choose its follower with a non-uniform probability, and
different leaders might have different probability distributions.
\begin{thrm}
\label{thrm: mimomate_vs_random}
If the throughput $r^{(u,v)}_v$ is greater than $0$ for all $u,v \in
V, u \neq v$, then the average throughput of Problem~\ref{Probdefi:
2mimomate}'s output is higher than or equal to the average
throughput achieved by any fair probabilistic assignment.
\end{thrm}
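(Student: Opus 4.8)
The plan is to exploit the positivity hypothesis to recognize the output of Problem~\ref{Probdefi: 2mimomate} as a maximum-weight \emph{derangement}, to encode every fair probabilistic assignment as a doubly stochastic matrix with zero diagonal, and then to reduce the theorem to a Birkhoff--von~Neumann style decomposition. First I would pin down Problem~\ref{Probdefi: 2mimomate}'s output. Writing $n=|V|$ and reading off the bipartite graph of Figure~\ref{fig:graph}(a), the hypothesis $r^{(u,v)}_v>0$ for all $u\neq v$ makes Constraint~\ref{Cndtn: 1-2mimomate} vacuous, so the edge set is the complete bipartite graph $K_{n,n}$ with its diagonal deleted (Constraint~\ref{Cndtn: 2-2mimomate}). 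For $n\ge 2$ this graph has a perfect matching, so Constraint~\ref{Cndtn: 4-2mimomate} forces $|M|=n$: every client is a leader once and a follower once. Thus $M$ is exactly a fixed-point-free permutation (derangement) $\sigma$ of $V$, with $(u,v)\in M$ precisely when $\sigma(u)=v$, and Constraint~\ref{Cndtn: 5-2mimomate} makes the output a maximum-weight derangement, whose average second-stream throughput is $\mathrm{OPT}=\frac{1}{n}\max_{\sigma}\sum_{u\in V} r^{(u,\sigma(u))}_{\sigma(u)}$, the maximum over all derangements.

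Next I would encode a fair probabilistic assignment. With the first winner chosen uniformly (probability $1/n$ per client, as in standard 802.11 contention) and leader $u$ selecting follower $v$ with probability $p_u(v)$ (so $p_u(u)=0$ and $\sum_v p_u(v)=1$), the fairness requirement that each client be the follower with probability $1/n$ reads $\frac{1}{n}\sum_u p_u(v)=\frac{1}{n}$, i.e. $\sum_u p_u(v)=1$ for every $v$. The row sums of $P=(p_u(v))_{u,v}$ are $1$ automatically and its column sums are $1$ by fairness, so $P$ is doubly stochastic with zero diagonal, and the expected second-stream throughput is $\frac{1}{n}\sum_{u,v} p_u(v)\,r^{(u,v)}_v=\frac{1}{n}\langle P,R\rangle$ with $R_{uv}=r^{(u,v)}_v$. (Uniformly random contention is the special case $p_u(v)=\frac{1}{n-1}$, which recovers Theorem~\ref{thrm: mimomate_vs_contention}.) Here the uniform first-winner assumption is essential: it is what supplies the row-sum constraint and hence the full doubly stochastic structure.

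The crux is to write $P$ as a convex combination of derangement permutation matrices. I would invoke the fact that the zero-diagonal doubly stochastic matrices form the face of the Birkhoff polytope obtained by making the nonnegativity inequalities $P_{uu}\ge 0$ tight; the vertices of a face are exactly the vertices of the polytope lying on it, namely the permutation matrices with zero diagonal, i.e. derangement matrices. Hence $P=\sum_k \lambda_k \Pi_{\sigma_k}$ with $\lambda_k\ge 0$, $\sum_k \lambda_k=1$, and each $\sigma_k$ a derangement. Substituting gives $\frac{1}{n}\langle P,R\rangle=\sum_k \lambda_k\big(\tfrac{1}{n}\sum_u r^{(u,\sigma_k(u))}_{\sigma_k(u)}\big)\le\big(\sum_k \lambda_k\big)\mathrm{OPT}=\mathrm{OPT}$, which is exactly the claim.

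I expect the main obstacle to be justifying that the decomposition uses \emph{derangement} matrices rather than arbitrary permutation matrices: ordinary Birkhoff--von~Neumann would admit permutations with fixed points, for which the average follower throughput is undefined (a fixed point is a client following itself). The face argument above settles this, but if a self-contained proof is preferred I would argue directly. Since the diagonal entries of $P$ vanish, they lie outside its support, so any permutation contained in the support is automatically fixed-point-free; and the support of a doubly stochastic matrix satisfies Hall's condition, hence contains such a permutation. One then peels off a maximal multiple $\lambda\,\Pi_{\sigma_1}$ of the corresponding derangement matrix, renormalizes the remainder by $1/(1-\lambda)$ to restore double stochasticity while keeping the diagonal zero, and inducts on the number of nonzero entries, mirroring the constructive proof of Birkhoff's theorem without ever touching the diagonal.
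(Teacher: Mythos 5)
Your proof is correct, and it reaches the result by a genuinely different route than the paper's. The paper's proof works through the integer program for bipartite maximum-weight matching after adding a large constant $C$ to every edge weight (the standard reduction from maximum-weight maximum-cardinality matching), shows the output of Problem~\ref{Probdefi: 2mimomate} is an optimal integral solution, shows any fair assignment is feasible for the LP relaxation, invokes integrality of the bipartite matching polytope~\cite{matching_theory} to compare objective values, and then must cancel the $C$-terms, which requires a separate lemma that the optimal fair assignment has row sums equal to one. Your argument rests on the same underlying polyhedral fact---integrality of the bipartite matching polytope, of which Birkhoff--von Neumann is the decomposition form---but packages it directly: zero-diagonal doubly stochastic matrices form a face of the Birkhoff polytope whose vertices are derangement matrices, so the fair assignment's value is a convex combination of derangement values, each at most the value of the maximum-weight derangement, which is exactly the matching output under the positivity hypothesis. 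This buys two real simplifications: no constant $C$ and its bookkeeping, and no row-sum lemma. The latter is not cosmetic: you build $\sum_{v} p_u(v)=1$ into the definition (each leader's choice is a genuine probability distribution), whereas the paper only assumes $\sum_{j} p^{i,j}\le 1$ (a leader may decline to pick a follower) and then argues via a perturbation that the optimum has full rows. Your reading is in fact the one the theorem needs: if followerless leaders are admitted, fairness alone no longer yields the doubly stochastic structure and the inequality can fail---e.g., with $|V|=3$, $r^{(1,2)}_2=r^{(2,1)}_1=1$ and all other throughputs $\epsilon>0$, letting leaders $1$ and $2$ pick each other with probability $2/3$ and client $3$ with probability $1/3$ while leader $3$ picks no one is fair (each client follows with probability $2/9$) and gives average second-stream throughput $\tfrac{1}{3}(\tfrac{4}{3}+\tfrac{2}{3}\epsilon)$, beating the best derangement's $\tfrac{1}{3}(1+2\epsilon)$; correspondingly, the paper's perturbation (adding a small value to a single deficient row) disturbs the equal-column-sum condition it is meant to preserve, so your definitional choice sidesteps a delicate step. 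Your face argument and the Hall's-condition peeling alternative are both sound and correctly confine the decomposition to derangements. Two small nits: state explicitly, as the paper does, that the first-stream contributions are identical under both schemes (this is why only second-stream averages are compared), and note that the uniform first-winner probability is what converts fairness into the unit column sums---the unit row sums come from each $p_u$ being a distribution, not from uniformity.
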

We prove the two theorems in the Appendix.B. Note that the two theorems
only holds when the throughput $r^{(u,v)}_v$ is greater than $0$ for
all $u,v \in V, u \neq v$. If this condition does not hold, i.e., some
$r^{(u,v)}_v = 0$, then using contention to choose the followers,
e.g., the method used in~\cite{SAM, multiround, multiround2}, would
fail ZF-SIC decoding.  On the other hand, to guarantee the success of
ZF-SIC decoding,  such a pair of clients would not be chosen in
Problem~\ref{Probdefi: 2mimomate}.  In addition, if we further
consider the overhead of using contention to choose followers, the
throughputs of the methods proposed in~\cite{SAM, multiround,
multiround2} would further decrease.

We next consider the 3-antenna AP scenario. Say clients $u,v$ and $w$
communicate with a 3-antenna AP concurrently and join the concurrent
transmissions one after another.  We define $(u,v,w)$ as the {\em
MIMO-mate} relation of clients $u,v$ and $w$.  The AP can use ZF-SIC
to decode client $w$ by projecting along the direction orthogonal to
both clients $u$ and $v$.  It re-encodes client $w$'s stream and
subtracts it from the received signals. The AP then decodes client $v$
by projecting the resulting signal along the direction orthogonal to
client $u$, and decodes client $u$ after removing the signals of
clients $v$ and $w$.  Let $r^{(u,v,w)}_v$ and $r^{(u,v,w)}_w$ denote
the throughput of clients $v$ and $w$, respectively. The {\em
MIMOMate} matching problem in a 3-antenna AP scenario can be defined
as follows:

\begin{probdefi}\label{Probdefi: 3mimomate} ({\bf \em 3-MIMOMate}) Given a
set of clients $V$ and the throughput $r^{(u,v,w)}_u$ and
$r^{(u,v,w)}_w$ for all $u,v,w \in V$, the matching problem is to find
a set $M \subseteq V \times V \times V$ such that
\begin{enumerate}
\item \label{Cndtn: 1-3mimomate} $r^{(u,v,w)}_v>0$ and
	$r^{(u,v,w)}_w>0 , \forall (u,v,w) \in M$,
\item \label{Cndtn: 2-3mimomate} $u \neq v \neq w, \forall (u, v, w) \in M$,
\item \label{Cndtn: 3-3mimomate} $u_1 \neq u_2, v_1 \neq v_2$, and
	$w_1 \neq w_2$ for any two distinct elements $(u_1, v_1, w_1),
	(u_2, v_2, w_2) \in M$,
\item \label{Cndtn: 4-3mimomate} $\vert M \vert$ is maximized,
\item \label{Cndtn: 5-3mimomate} $\sum_{(u,v,w) \in M} 
	(r^{(u,v,w)}_{v}+r^{(u,v,w)}_w)$ is maximized among those $M$
	satisfying Constraint~\ref{Cndtn: 4-3mimomate}.
\end{enumerate} 
\end{probdefi}

\begin{algorithm}[t]
\label{algo}
\begin{small}
\DontPrintSemicolon
\SetKwInOut{Input}{input}
\Input{a set of clients $V$ and the channel state information from each
client to AP's $N$ antennas}

Duplicate $V$ to $V_1, V_2,\cdots, V_N$\; 
Remove legacy 802.11 nodes from $V_2,\cdots,V_N$\; 
Initialize $M\leftarrow \{\}$

\For {$k:=1$ to $N-1$} { 
	For each edge $(u_i,v_j){\in}V_k \times V_{k+1}$, if $u_i$ has a
	predecessor or $u_i{\in}V_1$,  set the weight of edge $(u_i,v_j)$
	to the throughput of $v_j$ as it transmits concurrently with $u_i$
	and all its predecessors;
	otherwise, set the weight of $(u_i,v_j)$ to 0 \;
	$M'\leftarrow$ the solution of the 2-{\em MIMOMate} matching
	problem for $V_k \times V_{k+1}$ solved by~\cite{LEDA}\;
	\If {$k = 1$} {
		Add each $(u_i,v_j){\in} M'$ to $M$\; 
	}
	\Else {
		For each $(u_i,v_j){\in}M'$,  find the {\em MIMO-Mate}
		relation (element) $m \in M$ that includes $u_i \in V_k$ and
		add client $v_j \in V_{k+1}$ to the element $m$ \; 
	}
	
}
\Return $M$\;
\caption{$N$-{\em MIMOMate} Matching Algorithm}
\end{small}
\end{algorithm}

Similarly, in the 3-{\em MIMOMate} problem, we are only interested in
maximizing the throughput of the followers, i.e., clients $v$ and $w$.
We observe that the 3-{\em MIMOMate} problem, as illustrated in
Figure~\ref{fig:graph}(b), is actually a variation of the {\em Maximum
3-Dimensional Matching} problem~\cite{GareyJohnson}, which is defined
as follows: Let $X$, $Y$, and $Z$ be disjoint sets, and let $T$ be a
subset of $X {\times} Y {\times} Z$ that includes all feasible
matching combinations.  The problem finds the maximum matching $M
\subseteq T$ such that $u_1 {\neq} u_2, v_1 {\neq} v_2$ and $w_1
{\neq} w_2$ for any two distinct elements $(u_1, v_1, w_1)$ and $(u_2,
v_2, w_2)$ in $M$.  Hence, the differences between our 3-{\em
MIMOMate} problem and the 3-dimensional matching problem are 1) our
problem further considers the total weight of a matching (i.e.,
Constraint~\ref{Cndtn: 5-3mimomate}), and 2) Constraint~\ref{Cndtn:
2-3mimomate} in our problem restricts each client to be included in an
element at most once.  For example, $(u, u, v)$ is not a feasible
combination in our problem because client $u$ cannot transmit two
streams from its single antenna at the same time.

We note that a general $N$-{\em MIMOMate} matching problem can be
formulated in a similar manner, and it is a variation of the
$N$-dimensional matching problem~\cite{GareyJohnson}. On the other
hand, although the 2-{\em MIMOMate} problem is polynomial time
solvable, the $N$-{\em MIMOMate} problem  for any $N \ge 3$ is however
NP-hard. We will prove the NP-hardness of the 3-{\em MIMOMate} problem
in the Appendix.A by deriving a reduction from the 3-dimensional
matching problem (which is also NP-hard) to our problem.  The
NP-hardness of the $N$-{\em MIMOMate} problem can be proved in a
similar way.

\subsection{Heuristic Matching Algorithm}\label{sec:algo}
There is an approximation algorithm~\cite{3dm} proposed to solve the
$N$-dimensional matching problem. We can use the algorithm to solve
our $N$-{\em MIMOMate} matching problem and achieve an approximation
ratio, 3/2+$\epsilon$, for any $\epsilon > 0$, in terms of the size of
matching. It however does not ensure to find the one achieving the
maximal throughput (i.e., Constraint~\ref{Cndtn: 5-3mimomate}) among
all maximum matchings.  In addition, our problem requires an
additional cost to compute the weights (throughputs) of all possible
{\em MIMO-Mates}, which is an $O(|V|^N)$ computational cost.  We hence
propose an algorithm, as shown in Algorithm~\ref{algo}, to solve our
{\em MIMOMate} matching problem with a reduced cost of weight
computation.  The basic idea of Algorithm~\ref{algo} is to decompose
the $N$-{\em MIMOMate} matching problem into $(N-1)$ 2-{\em MIMOMate}
matching problems, each of which can be solved by the bipartite
maximum weighted maximum cardinality matching algorithm~\cite{LEDA} in
polynomial time. The advantage of such decomposition is that it
reduces the cost of weight computation from $O(|V|^N)$ to $O(N|V|^2)$.  

\begin{figure}[t!]
\centering
{\footnotesize
\begin{tabular}{cc}
\epsfig{file=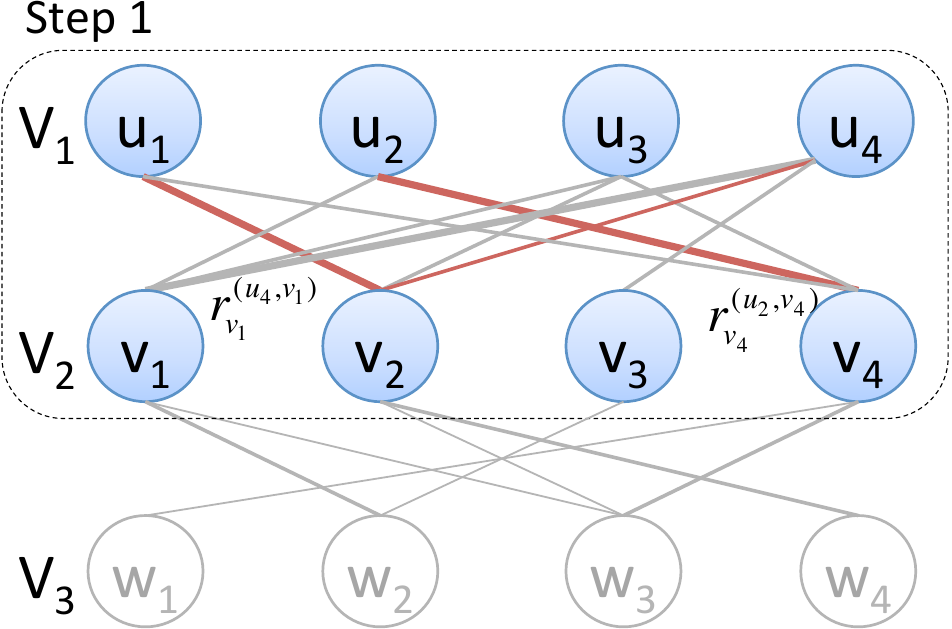, scale=.42} &
\epsfig{file=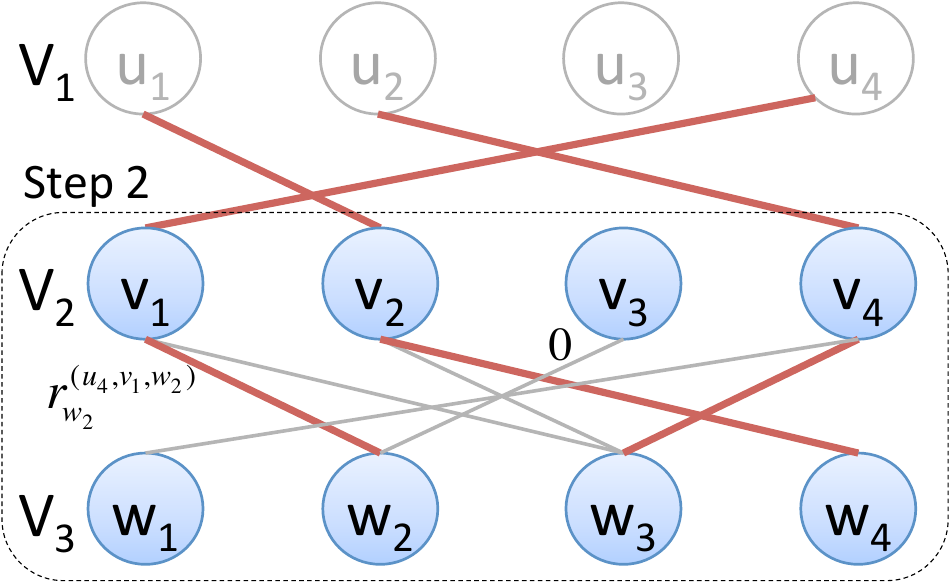, scale=.42}\\
(a) step 1 &
(b) step 2
\end{tabular}
}
\vspace{-12pt}
\caption{Example of solving 3-{\em MIMOMate} Matching}
\label{fig:3matching}
\end{figure}

For simplicity, we use the 3-antenna AP scenario to describe our
algorithm, and next explain how to generalize it to an $N$-antenna AP
scenario. We first duplicate the client set $V$ to $V_1, V_2$ and
$V_3$ (line 1 in Algorithm~\ref{algo}).  Our algorithm solves the
3-{\em MIMOMate} matching problem in two steps, as illustrated in
Fig.~\ref{fig:3matching}. In the first step, as in
Fig.~\ref{fig:3matching}(a), we set the weight of edge
$(u_i,v_j){\in}V_1{\times}V_2$ by computing the throughput of $v_j
{\in} V_2$ as it transmits concurrently with $u_i {\in}V_1$, and try
to optimally match any second client in $V_2$ to a {\em MIMO-Mate}
lead in $V_1$, which is actually the 2-{\em MIMOMate} matching problem
for $V_1 \times V_2$.  After the first step, we determine the {\em
MIMO-Mate} relation for the first and second transmitters.  Then, in
the second step, our goal is to add any third transmitter in $V_3$
into each {\em MIMO-Mate} relation. To do so, we first assign edge
$(v_j, w_k){\in}V_2{\times}V_3$ a weight that equals the throughput of
$w_k$ when it transmits concurrently with $v_j$ and $v_j$'s
predecessor, which is solved in the first step. For example, in
Fig.~\ref{fig:3matching}(b), because $(u_4, v_1)$ is matched as {\em
MIMO-Mates} in the first step, the weight of edge $(v_1, w_2)$ equals
the throughput of $w_2$ when it joins concurrent transmissions of
$u_4$ and $v_1$. We note that, for any $v_j{\in}V_2$, if it is not
assigned a predecessor in the first step, we set the weight of all its
outgoing edges to 0 because it is not allowed to match with any
clients in $V_3$ in the second step.  After weight assignment, we can
solve another 2-{\em MIMOMate} matching problem to match clients in
$V_3$ to clients in $V_2$.

Our algorithm can be generalized to the $N$-{\em MIMOMate} matching
problem.  Specifically, it iteratively solves a 2-{\em MIMOMate}
matching problem to match clients in $V_{k+1}$ to clients in $V_k$,
where $k=1,2,\cdots,N-1$.  Hence, each iteration includes a client in
a {\em MIMO-Mate} relation. In particular, the $k^{th}$ iteration adds
the $(k+1)^{th}$ concurrent client to {\em MIMO-Mates}. In addition,
in the $k^{th}$ iteration, because we already know the first $k$
clients in {\em MIMO-Mates}, we thus only need to compute the weight
of edges $(u,v) \in V_k \times V_{k+1}$ according to the given {\em
MIMO-Mate} relation. This is why our algorithm can reduce the cost of
weight computation to $O(N|V|^2)$.

So far we describe our algorithm by assuming that all the clients are
\name\ nodes.  Our algorithm can be slightly adjusted to allow the
coexistence of \name\ nodes and legacy 802.11 nodes.  Recall that we
duplicate the client set $V$ to $V_1, V_2, \cdots, V_N$, where $N$ is
the number of antennas equipped on the AP.  Each node in the
duplicated set $V_i$ is a candidate of sending the $i^{th}$ stream.
Note that legacy nodes follow the traditional 802.11 operation and can
only contend for sending the first stream. In other words, legacy
nodes do not leverage the concurrent transmission opportunities and
will not follow any ongoing transmissions.  We can hence simply remove
those legacy nodes from $V_2,\cdots,V_N$, and only keep them in $V_1$.
By doing this, legacy nodes can still use conventional 802.11
contention to occupy the first dimension, and can further be followed
by some other \name\ nodes. Consider Fig.~\ref{fig:3matching} as an
example. Assume that node $u_4$ is a legacy node. We only put it in
$V_1$, but not in $V_2$ and $V_3$. It hence can be followed by some
other MIMOMate nodes, but cannot join concurrent transmissions.
\section{\name's Medium Access Protocol}\label{sec:mac}
We consider a MU-MIMO MAC protocol similar to SAM~\cite{SAM}, where
clients join the concurrent transmissions one after another.  Like
SAM~\cite{SAM}, clients join concurrent transmissions one
after another. Each client counts the number of concurrent streams by
cross-correlating with the known preamble in the presence of
ongoing transmissions. Clients can join the concurrent transmissions
until they detect that the number of existing streams equals the
number of antennas at the AP.\footnote{The preamble-counting based
	protocol, like SAM~\cite{SAM}, could suffer from collisions when
	hidden nodes interrupt the preamble-counting process. We apply a
	multi-round light-weight handshaking mechanism proposed
	in~\cite{lightrtscts} to address the hidden terminal problem with
minimum overhead.}
Each client determines its best bit-rate based on TurboRate, the
MU-MIMO rate adaptation scheme proposed in~\cite{TurboRate}.
TurboRate allows each client to announce training symbols before data
transmission.  All the clients who contend for the later transmissions
can hence learn the channels of the ongoing streams from those
training symbols, and adapt the bit-rates based on their channels.
Moreover, TurboRate asks clients to give up contention opportunities
if their SNR after ZF-SIC decoding is too low to be decodable.  To
increase	the gain of MU-MIMO, the protocol forces concurrent
clients to end their transmissions at about the same time. To do so,
concurrent clients overhear the information about the frame duration
of the first stream, which is embedded in the MAC header, and fragment
or aggregate their packets accordingly~\cite{TurboRate}\cite{n+}.
\name\ differs from the existing MAC protocols in that it only allows
clients to use 802.11's CSMA/CA to contend for the first stream, but
lets the remaining clients join the concurrent transmission of its
predecessor in the {\em MIMO-Mate} relation scheduled by
Algorithm~\ref{algo}.  In particular, say a client is scheduled to
transmit the $k^{th}$ stream in the {\em MIMO-Mates}; it can start
transmitting once it detects $k-1$ preambles from all its predecessors
after its leader wins the contention.  Hence, all clients in the {\em
MIMO-Mates} only require one contending process.

To realize such a user matching protocol, \name's MAC needs tow major
modifications: 1) the AP needs to learn the uplink channel information
of all its clients, and 2) the AP needs to announce the matching
result to its clients.  To learn the channel information, one possible
solution is to let all the clients learn their uplink channels and
report this information to the AP. To do so, the clients leverage {\em
channel reciprocity}~\cite{reciprocity}, which refers to the property
that the channels in the forward and reverse directions are the same.
Using reciprocity, every client can exploit the beacons to learn the
downlink channel and use it to estimate the uplink channel.  It is
however an expensive overhead to ask all the clients to report their
channels for every packet transmission. On the other hand, legacy
nodes, which follow the traditional 802.11 operation, do not feedback
this information.  We hence perform the following optimizations to
reduce the overhead of channel feedback: The AP learns the uplink
channel of a client from its uplink frames, including the association
frames when that client joins the network, the data frame of its
uplink packets, and the ACK of its downlink packets. The AP hence only
needs to re-schedule {\em MIMO-Mates} when it detects that the
channels of certain clients change due to channel variation or user
mobility.  Once the AP reschedules {\em MIMO-Mates}, it announces the
updated matching result to the \name\ nodes. A simple solution is to
annotate the periodical beacon messages with the announcement.
However, legacy nodes might not be able to identify the modified
beacon format. To enable the coexistence of \name\ nodes and legacy
nodes, we let the AP send the matching result in another control frame
using the subtype not used in conventional
802.11~\cite{80211_OReilly}.

\begin{figure}[t!]
\centering
\epsfig{file=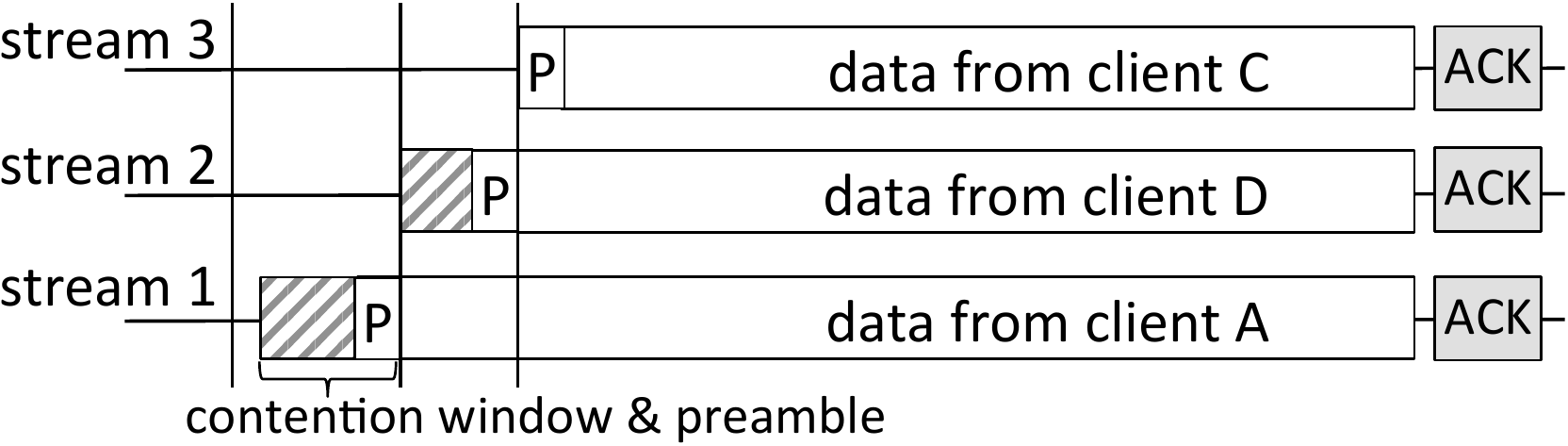, scale=.4}
\vspace{-12pt}
\caption{Angle-based contention}
\label{fig:protocol}
\end{figure}

We notice that the above protocol only operates properly if the
scheduled {\em MIMO-Mates} always have traffic to transmit. However,
in practice, a client might have a bursty traffic pattern. Hence, to
enable users to fully utilize concurrent transmission opportunities,
we can further integrate \name\ with any contention-based MAC
protocol. Specifically, clients can contend for the unused degrees of
freedom if the scheduled {\em MIMO-Mates} do not have traffic to
transmit.  However, to better exploit the gain of concurrent
transmissions, we propose to integrate \name\ with an {\em
angle-based} contention scheme. In particular, when any of the
scheduled {\em MIMO-Mates} does not have traffic to transmit, we allow
other clients to contend for the concurrent transmission opportunity,
e.g., the second stream in the example shown in
Fig.~\ref{fig:protocol}, until the number of concurrent streams equals
the number of antennas supported by the AP, $N$.  However, we modify
the contention mechanism to assign different users a different
probability of winning a concurrent transmission opportunity,
according to their channel orthogonality with the ongoing streams.
Specifically, we tend to let a client with a larger angle between its
channel and the channels of the ongoing streams have a higher
probability to win the concurrent transmission opportunity such that
SNR reduction due to projection can be minimized. 

\begin{algorithm}[t]
\label{algo:contention}
\begin{small}
\DontPrintSemicolon
\SetKwInOut{Input}{input}
\Input{the initial contention window of the $k^{th}$ stream $CW^k
\gets CW_{min}$; the initial update of the $k^{th}$ stream
$\delta^k_{cur} \gets 0$; $N$ antennas at the AP}

\For {$1 \le k \le N$} {
	\textsf{\footnotesize{//contention for the $k^{th}$ stream in each packet
	transmission}}\\
	Learn the angle $\theta \in [0,\pi/2]$ between the client's
	channel and the channels of the $(k-1)$ ongoing streams\;
	\If {\mbox{SNR after projection} $\leq$ \mbox{802.11 
	SNR regime}}
	{
		{\bf return;} \textsf{\footnotesize{//give up this contention}}
	}
	Update $CW^k$ using traditional 802.11 backoff\; 
	\If {$k>1$} {
		$\delta^k_{last} \gets \delta^k_{cur}$\;
		$CW_{orig}^k \gets CW^k  - \delta^k_{last}$\;
		$CW^k \gets CW_{orig}^k - \frac{\theta - \pi/4}{\pi/4}*CW_{orig}^k$\;
		$CW^k \gets \max(CW_{min}, \min(CW_{max}, CW^k))$\;
		$\delta^k_{cur} \gets CW^k - CW^k_{orig}$\;
	}
}
\caption{Angle-based Contention Scheme}
\end{small}
\end{algorithm}

To achieve this goal, we apply Algorithm~\ref{algo:contention} to
adjust the contention window for each concurrent stream according to
the channels of the concurrent transmitters. Specifically, when a node
contends for sending the $k^{th}$ stream in the presence of $(k-1)$
ongoing transmissions, it will adjust its contention windows based on
the angle between its own channel and the channels of the $(k-1)$
ongoing transmitting clients. We assume that clients can learn the
angle between its own channel and the channels of the ongoing
transmitters using the distributed method proposed
in~\cite{TurboRate}. The high-level idea of the angle-based contention
is that, if this angle is large, we let the client decrease its
contention window and hence earn a higher probability to win the
concurrent transmission opportunity.  Otherwise, the client gives
other clients a higher priority to transmit concurrently by increasing
its contention window.

To realize the above design, we let each client maintain a distinct
contention window $CW^k$ for the contention of the $k^{th}$ stream.
The contention windows are adjusted according to the channels of the
ongoing clients. The amount of increment (or decrement) is
proportional to the inter-client angle, i.e.  $\frac{\theta - \pi/4
}{\pi/4}$ in line 10.  To ensure fairness, we ask a client assigned a
higher priority in the current packet to pay back its opportunity in
the next packet.  To this end, if a client decreases (increases) the
contention window by $\delta$ for the current packet, it pays (earns)
the priority back by increasing (decreasing) $\delta$ to its
contention window for the next packet, i.e., $\delta_{last}$ in line
8. The above contention scheme can be applied for the contention of
each concurrent stream until the number of streams reaches the number
of antennas supported by the AP, i.e., $1 \le k \le N$.

\vskip 0.05in
\noindent {\bf Overhead and complexity:}
Recall that implementing \name\ as a MAC protocol relies on two
modifications: 1) the AP needs to learn the uplink channels, and 2)
the AP needs to announce the matching results. Note that we let the AP
measure the channel information from historical uplink frames without
any additional message overhead.  The only additional overhead
required by our design is the matching announcement.  We will show in
\xref{sec:result} that such a small overhead does not offset the gain
of our matching algorithm.  On the other hand, since our matching
algorithm is performed in the access point, and the complexity of
clients should not change much.  Therefore, the only supports we need
from \name\ clients are that 1) they need to receive the matching
announcement, and 2) they need to adapt the contention window size
based on a simple operation defined in our angle-based contention
scheme, as in Algorithm 2. We believe additional power consumption in
the clients due to our design should be negligible.

\section{Experimental Results}\label{sec:result}
We build a prototype of {\name} using the USRP-N200 radio platform,
which is equipped with an RFX2400 daughterboard.  A multi-antenna AP
is built by combining multiple USRP-N200 boards using an external
clock. We implement an OFDM PHY layer with standard 802.11 modulations
(BPSK, 4-64QAM) and code rates. Since USRP-N200 operates on a 10MHz
channel, the possible bit rates range from 3 to 27 Mb/s.  We evaluate
the performance of \name\ in both 2-antenna and 3-antenna AP
scenarios.  Limited by the number of USRPs we have, we set 6 clients
to contend for transmitting two packets concurrently to the 2-antenna
AP, while setting 5 clients to contend for transmitting three packets
concurrently to the 3-antenna AP.  To allow multiple clients to
transmit concurrently, we leverage the synchronization method used
in~\cite{TurboRate}\cite{n+}. Specifically, for each experiment, the
AP broadcasts a trigger signal.  Each client records the timestamp of
detecting the trigger, $t_{trigger}$, waits a pre-defined period of
time, $t_{\Delta}$, and sets the timestamp of the beginning of its
transmission to $t_{start} = t_{trigger} + t_{\Delta}$. In our
testbed, $t_{\Delta}$ is set to 0.1s, which is long enough to tackle
the delays introduced by software.

We compare the following schemes: 1) \name, which is our proposed
protocol, 2) {\em max-throughput first}, which always allows the
client that achieves the maximal throughput after projection to join
the concurrent transmissions, 3) {\em max-angle first}, which always
allows the client that has the maximum angle with the ongoing
transmissions to transmit concurrently, 4) SAM~\cite{SAM}, i.e., {\em
contention-based} protocol without RTS/CTS, which assigns all users an
equal probability to sequentially contend for each concurrent
transmission opportunity, and 5) MRC~\cite{multiround}, i.e.,
multi-round contention, which also assigns each client an equal
probability of winning contentions, but precedes concurrent
transmissions with multiple rounds of RTS and a single CTS. For all
the comparison schemes, we apply TurboRate~\cite{TurboRate}, a MU-MIMO
rate adaptation scheme, to allow concurrent clients to select their
best bit rates.

Due to the timing constraints limited by software radio, we do not
implement contention, random backoff and ACK in USRPs.  Instead, for
each experiment, we offline create a packet trace of 1,000 1500-byte
packets for each client.  The traces of different clients are
generated based on the above four comparison schemes, and ensure that
there are at most 2 and 3 clients assigned to transmit concurrently in
a particular time-slot in 2- and 3- antenna AP scenarios,
respectively. In particular, in the beginning of each experiment, we
let each client transmit training symbols, one after another, for the
AP to estimate its uplink channel. The AP then performs offline
contention to generate 1,000 rounds of concurrent transmissions.  For
all the comparison schemes, in each round of concurrent transmissions,
the AP assigns each client a randomly-selected backoff value between 1
and its contention window, and picks the client with the smallest
backoff value to send the first stream. The contention window of each
client is updated according to the 802.11 standard if collisions
occur. The AP then assigns the remaining concurrent transmission
opportunities to other clients based on design of various comparison
protocols. For example, in SAM, the AP uses the same contention scheme
to assign the remaining transmission opportunities; in \name\, the AP
assigns {\em MIMO-Mates} of the first contention winner to transmit
concurrently; in {\em max-throughput first} and {\em max-angle first},
the AP assigns the remaining transmission opportunities based on the
throughput and the angle between channels, respectively. Based on the
contention results, the AP generates the packet trace for each client,
and immediately sends the trace to each client through Ethernet
connection. Each USRP client can hence read its offline-generated
packet trace and determine the time that it should transmit packets
accordingly.  Clients are asked to send null symbols, i.e., 0, if they
are not selected to transmit in a round of packet transmissions. Since
offline contention performed in the AP does not take too much time, we
expect that the channels do not change significantly, i.e., the
channels during data transmissions would be similar to those learned
in the training phase. In addition, since USRPs cannot implement
real-time ACK, we disable retransmissions in the experiments.  That
is, the AP simply drops a packet if the packet cannot be received or
decoded correctly.

We first evaluate the performance of \name\ when clients have a
continuous traffic pattern, and next evaluate the performance of
integrating the angle-based contention mechanism
(Algorithm~\ref{algo:contention}) with \name\ when clients have a
bursty traffic pattern.

\subsection{Performance Comparison for Continuous Traffic}\label{sec:result-cont}
We evaluate the performance of the comparison schemes in terms of 1)
throughput gain, 2) fairness, and 3) overhead.

\begin{figure}[t!]
\centering
{
\footnotesize
\begin{tabular}{c}
\epsfig{file=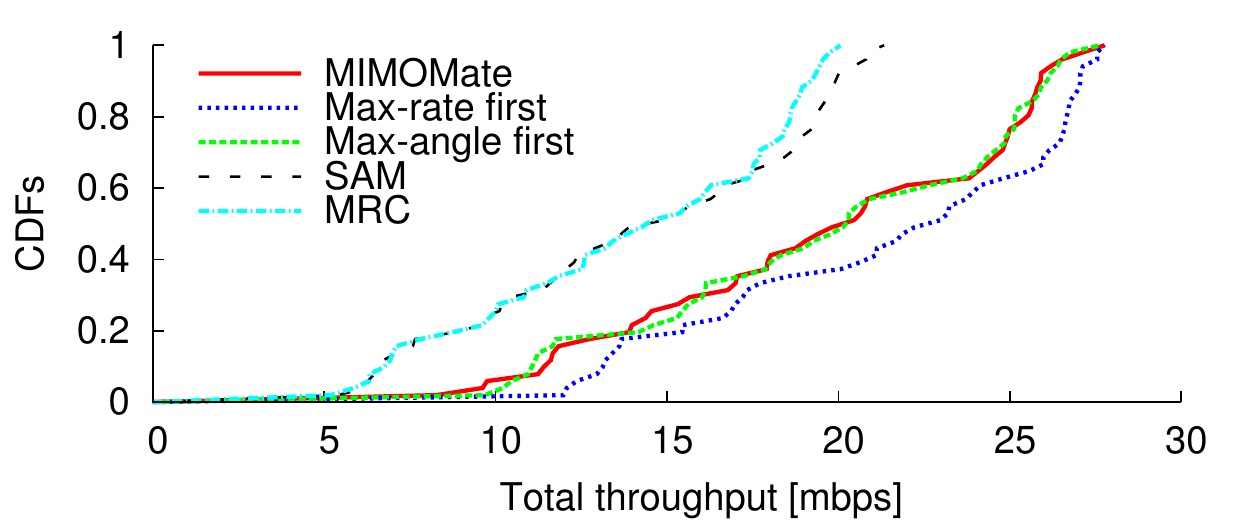, width=3.2in} \\
(a) Total throughput in the 2-antenna AP scenario  \\
\epsfig{file=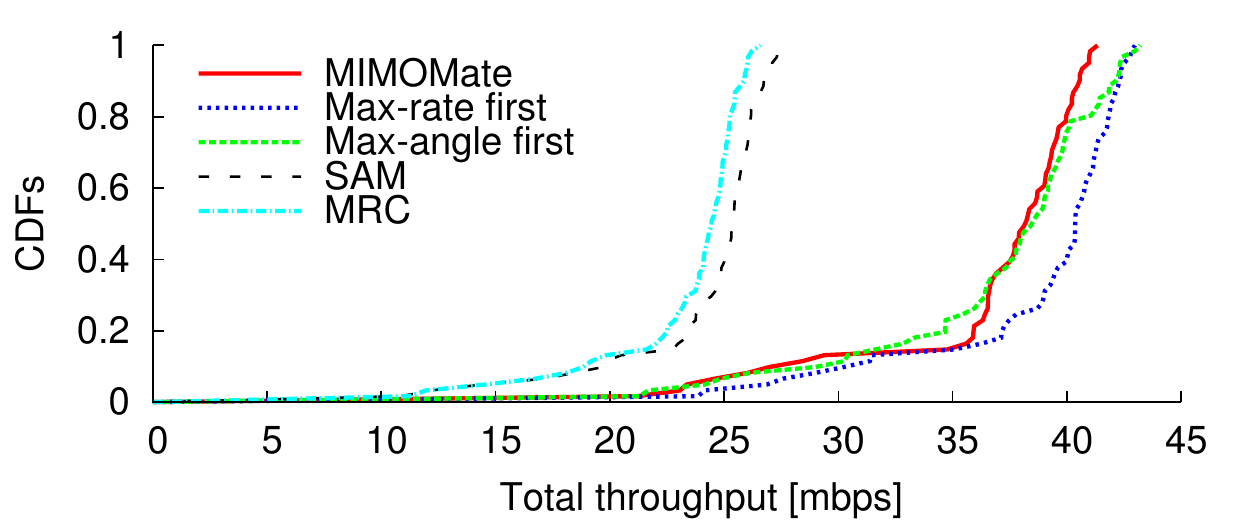, width=3.2in}\\
(b) Total throughput in the 3-antenna AP scenario
\end{tabular}
}
\vspace{-6pt}
\caption{Throughput for continuous traffic}
\label{fig:tput-cont}
\end{figure}

\noindent {\bf Throughput gain:}
We first check the throughput gain delivered by \name\ when users have
a continuous traffic pattern, i.e., always have packets to send.
Hence, in \name, the scheduled {\em MIMO-Mates} can always transmit
concurrently if their lead wins the first contention. We repeat the
experiment with random assignment of client locations in our testbed.

Figs.~\ref{fig:tput-cont}(a) and~\ref{fig:tput-cont}(b) plot the CDFs
of the total throughput in 2-antenna and 3-antenna scenarios,
respectively. The figures show that traditional contention-based
protocols, i.e., SAM and MRC, assign each user an equal probability to
win the contention, without considering the channel characteristics,
and produce a low throughput. MRC requires additional RTS-CTS
overhead, and hence performs a little bit worse than SAM. Compared to
SAM (MRC), the average throughput gain from enabling concurrent
transmissions with \name's user selection is about 42\% (45\%) and
52\% (57\%) in 2- and 3-antenna AP scenarios, respectively.  The gain
mainly comes from two design principles in \name: 1) minimizing SNR
reduction due to MIMO decoding, and 2) reducing the channel time
wasted for contending for concurrent transmissions. Note that the gain
in the 3-antenna AP scenario is higher than that in the 2-antenna AP
scenario. It implies that user matching plays an important role to
deliver the MU-MIMO gain especially when the number of concurrent
transmissions supported by the system increases.  The figures also
show that {\em max-angle first} and {\em max-throughput first} produce
a throughput comparable to (or even slightly higher than) our \name\
because they greedily select the users with the best channel
characteristics or with the highest throughput to join the concurrent
transmissions. In addition, similar to \name, they also require only
one contending process. We will show later that these two schemes
however result in unfair resource sharing.

\begin{figure}[t!]
\centering
{
\footnotesize
\begin{tabular}{c}
\epsfig{file=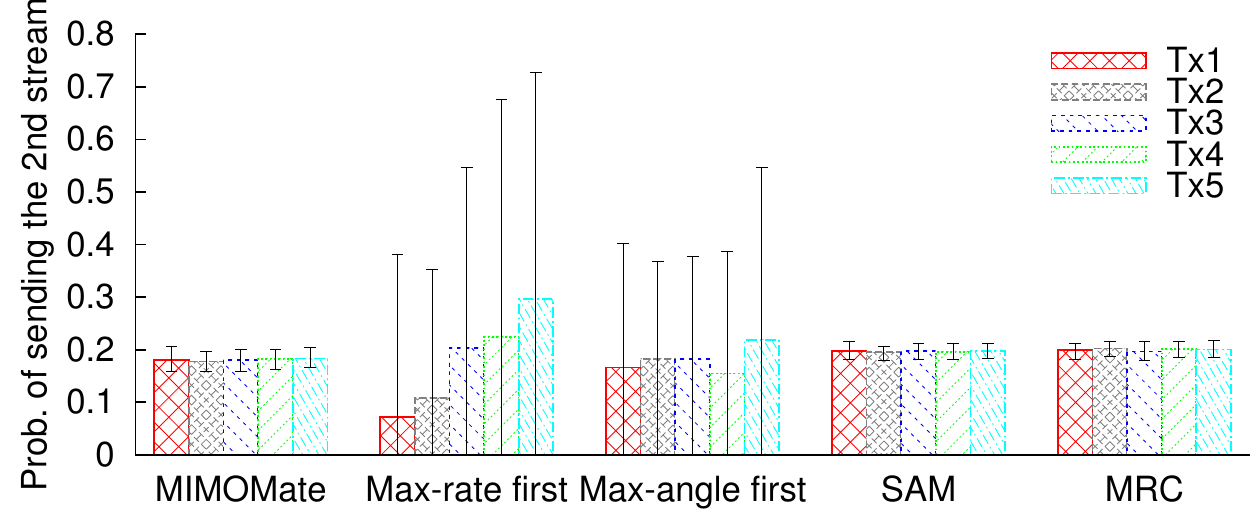, width=3in} \\
(a) Fairness of the second stream  \\
\epsfig{file=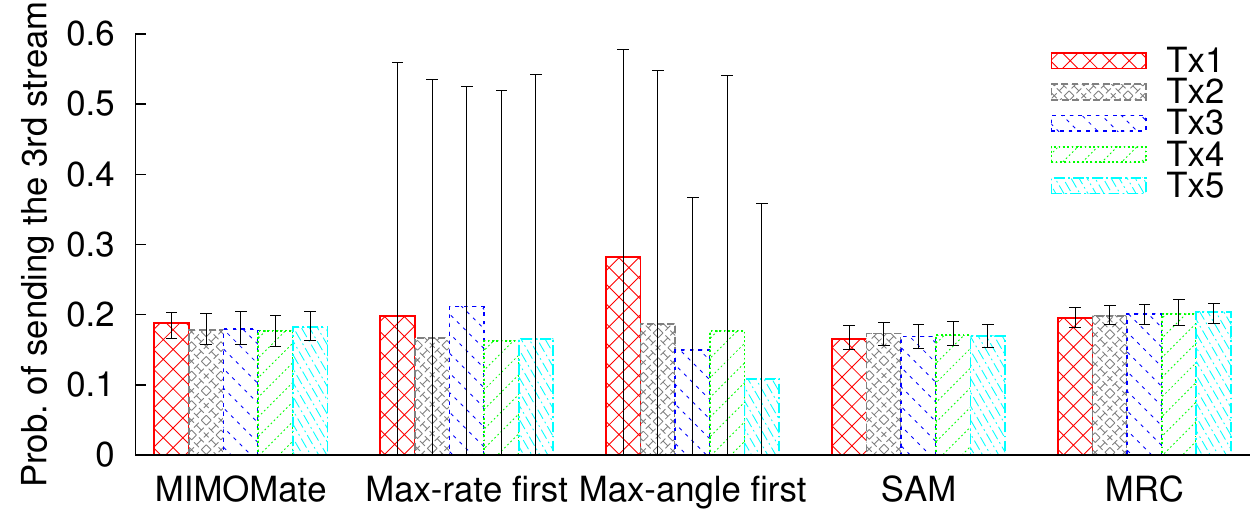, width=3in}\\
(b) Fairness of the third stream
\end{tabular}
}
\vspace{-6pt}
\caption{Fairness comparison}
\label{fig:fairness}
\end{figure}

\noindent {\bf Fairness:}
We next examine fairness of sharing concurrent transmission
opportunities among clients in a 3-antenna AP scenario.  We plot in
Figs.~\ref{fig:fairness}(a) and~\ref{fig:fairness}(b) the number of
the second transmission opportunities and the third transmission
opportunities obtained by each client over the total number of
transmissions, which is the metric used to evaluate fairness in our
experiments. The figures show that both the contention-based schemes,
i.e., SAM and MRC, and our \name\ enable all clients to get almost an
equal probability to transmit the second stream and the third stream,
respectively.  This implies that our matching algorithm enables users
to achieve the same level of fairness as if they use a fair contention
mechanism. The probability of sending the third stream in SAM is
however slightly lower than that in \name\ and MRC.  This is because,
if the transmission time of the first stream is too short due to a
high data rate, then there might be no enough time for SAM to hold the
third stream and its contention. On the other hand, in {\em
max-throughput first} and {\em max-angle first}, users cannot have a
fair opportunity to transmit concurrently because these two schemes
always favor certain users to achieve a high throughput.  Based on the
results in Figs.~\ref{fig:tput-cont} and \ref{fig:fairness}, we
conclude that \name\ achieves a throughput comparable to the greedy
algorithms, while providing users a fairness level similar to the
contention mechanism. 

\begin{figure}[t!]
\centering
\epsfig{file=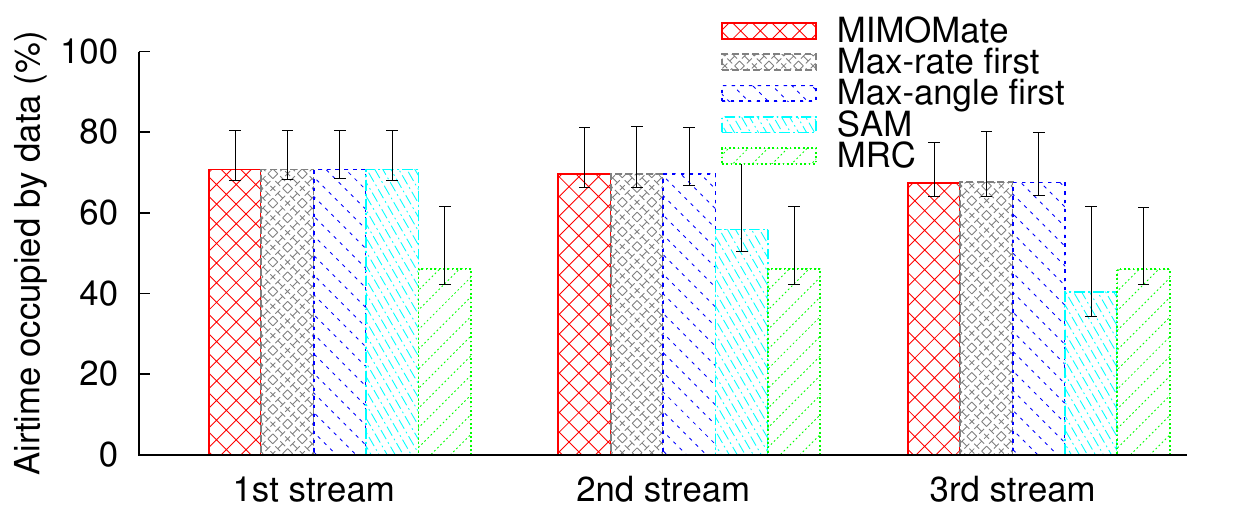, width=3in}
\vspace{-12pt}
\caption{Airtime occupied by the data streams}
\label{fig:overhead}
\end{figure}

\noindent {\bf Overhead:}
We now compare the overhead of different systems. Recall that we do
not implement ACK and contention in USRPs. Thus, we offline compute
the channel time occupied by the protocol overhead. To do so, we feed
the throughput outputted by the experiments (i.e., the rate of the
data frames without considering the 802.11 overhead) to the offline
computation, and add the overhead of each protocol, including
contention, interframe timing (SIFS/DIFS), the PLCP/MAC headers, and
the ACK, to each packet transmission.  Fig.~\ref{fig:overhead} plots
the percentage of airtime occupied by the data frames, which is
computed by the ratio of airtime for data transmission to the overall
airtime occupied by a packet (i.e., including the overhead).  The
figure shows that, since clients in our \name\ and SAM use 802.11's
contention to compete for sending the first stream, their overhead for
the first stream is the same with that of the conventional 802.11. By
eliminating the contending process for the second stream, \name\ and
the greedy algorithms can utilize about 63\% of airtime to transmit
the second streams.  In contrast, both SAM and MRC require multiple
rounds of contention, which significantly offset the MIMO gain when
the number of antennas supported by the AP keeps increasing.  SAM
allows each client to transmit immediately once it wins the
contention. Hence, the airtime of data sent in a higher dimension
becomes shorter and shorter. For MRC, all concurrent clients start
their transmissions after receiving the CTS, their available airtime
is hence shorter yet the same.  However, since multiple clients might
send RTS concurrently in MRC, the number of contention rounds required
in MRC might be fewer than that in SAM. This explains why the airtime
of the third stream in MRC is longer than that in SAM.

\begin{figure}[t!]
\centering
{
\footnotesize
\begin{tabular}{c}
\epsfig{file=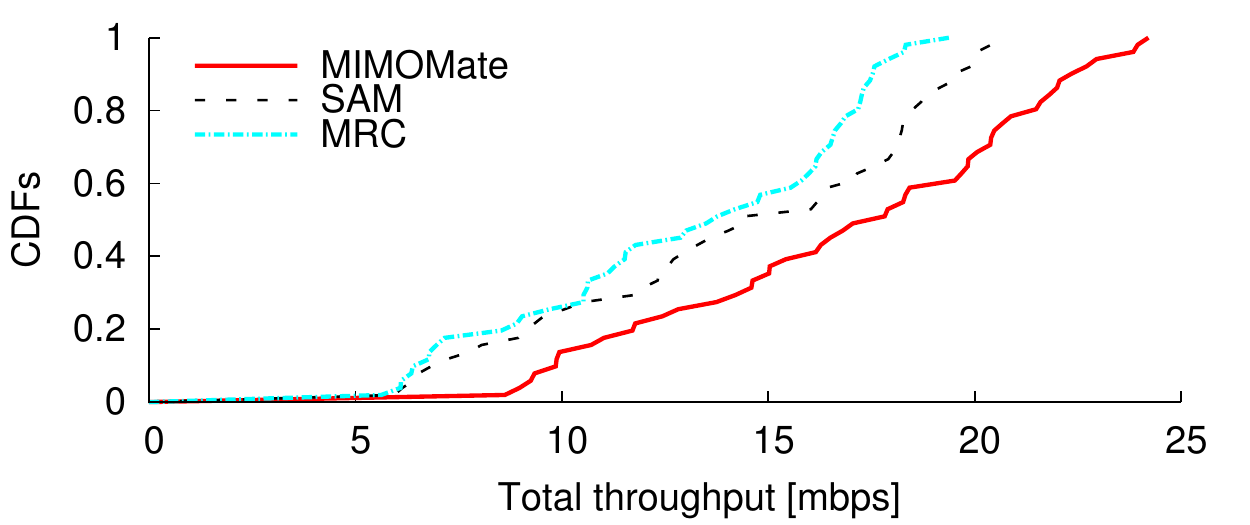, width=3in} \\
(a) Total throughput in the 2-antenna AP scenario  \\
\epsfig{file=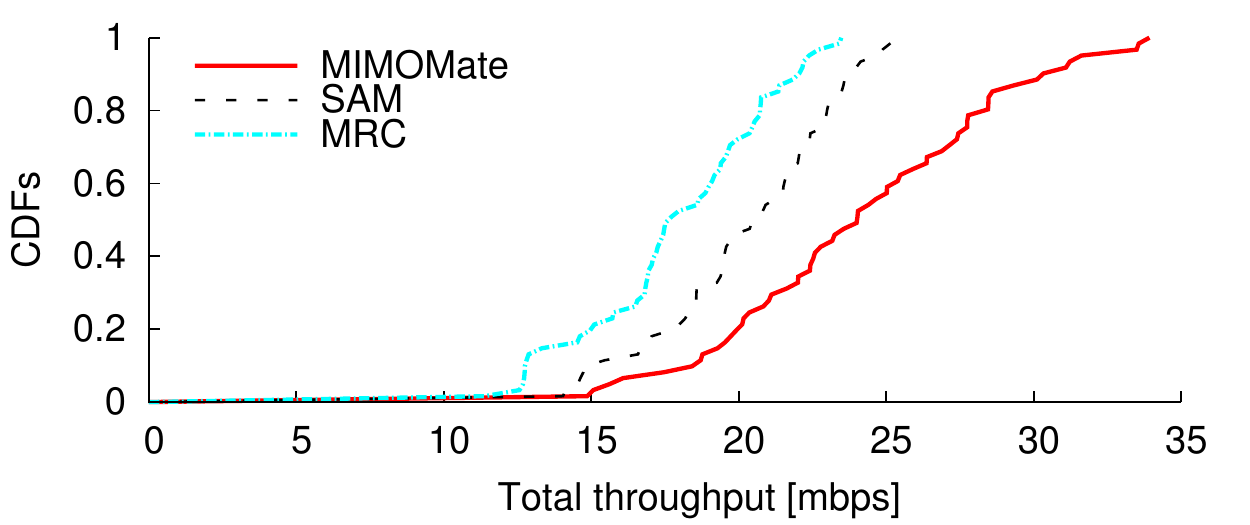, width=3in}\\
(b) Total throughput in the 3-antenna AP scenario
\end{tabular}
}
\vspace{-6pt}
\caption{Throughput for bursty traffic}
\label{fig:tput-bursty}
\end{figure}

\subsection{Throughput Gain for Bursty Traffic}
We next evaluate the performance of \name\ when clients have a bursty
traffic pattern. The packet traces are generated using the following
model. Each user transmits several files to the AP, and the size of
each file is randomly selected from 500 to 550 KB. The arrival of file
transmission follows a Poisson process with an arrival rate $\lambda =
2$ files per second. Thus, when any of the scheduled {\em MIMO-Mates}
does not have traffic to send, other clients contend for transmitting
concurrently using the contention window computed based on
Algorithm~\ref{algo:contention}.  Fig.~\ref{fig:tput-bursty} plots the
CDFs of the total throughput. The figure shows that, compared to SAM
(MRC), the average throughput gain achieved by combining \name\ with
angle-based contention is about 22\% (33\%) and 19\% (36\%) for 2- and
3-antenna scenarios, respectively.  The throughput gain in this case
is lower than that in the continuous traffic scenario because
angle-based contention introduces additional contention overhead.
Also, for some periods, there are only a few clients with traffic
demand and hence concurrent transmission opportunities cannot be fully
utilized. The gap between SAM and MRC for bursty traffic is larger
than that for continuous traffic. This is because, in MRC, when there
are always at least $N$ contending clients, where $N$ is the degrees
of freedom, the AP might be able to detect $N$ clients within fewer
than $N$ contention rounds, if some clients send RTS at the same time
in one contention round. However, when the number of contending
clients is less than $N$, the AP always needs to wait for the duration
of $N$ rounds of RTS and then responds the CTS.

\begin{figure}[t!]
\centering
{
\footnotesize
\begin{tabular}{c}
\epsfig{file=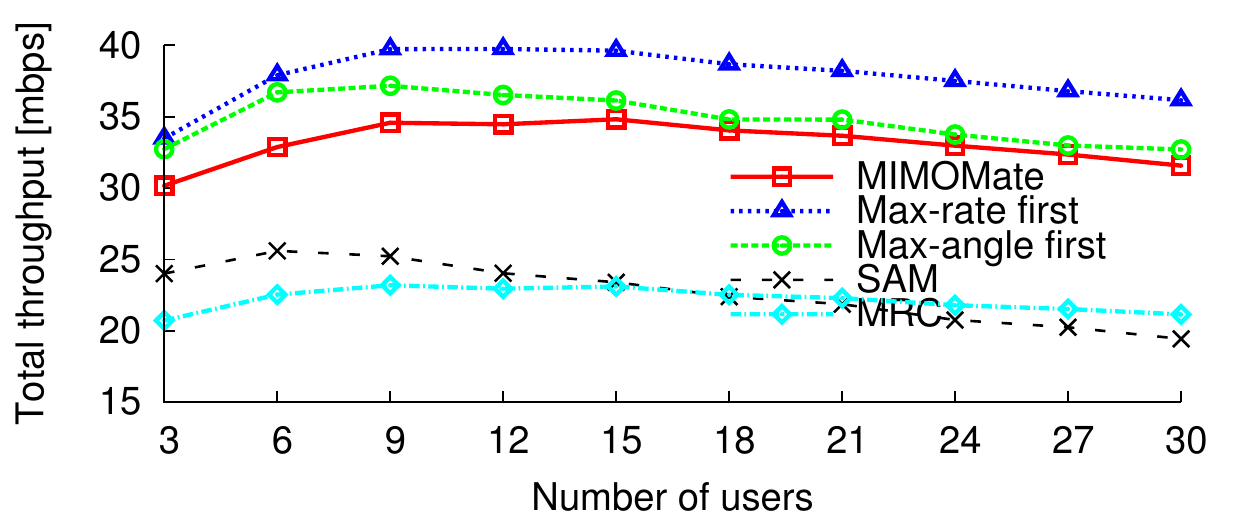, width=3in} \\
(a) Total throughput in the 2-antenna AP scenario  \\
\epsfig{file=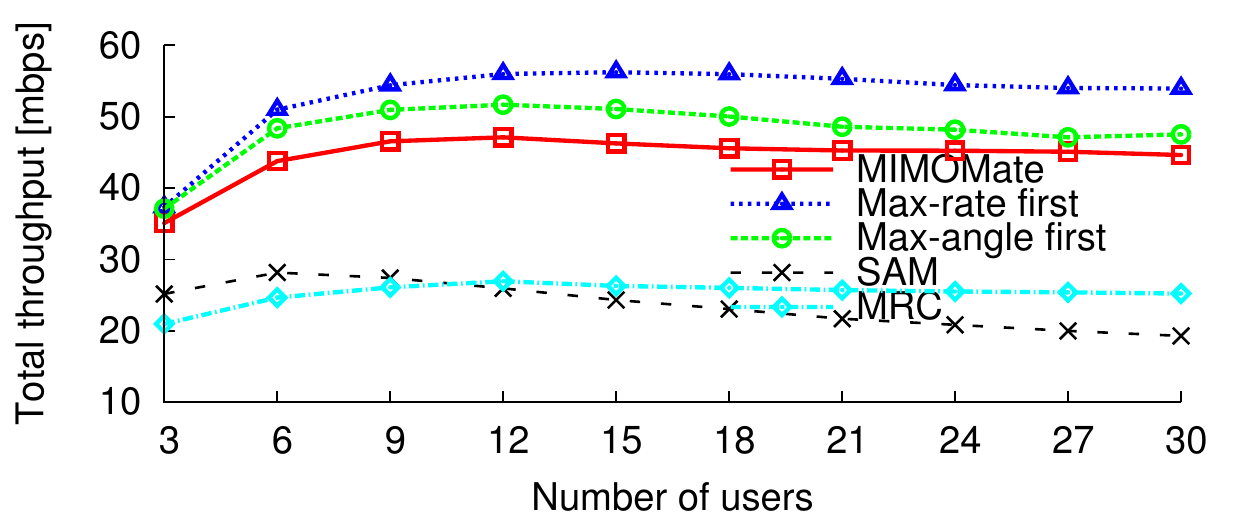, width=3in}\\
(b) Total throughput in the 3-antenna AP scenario
\end{tabular}
}
\caption{Impact of number of users on throughput}
\label{fig:usrnum-tput}
\end{figure}

\section{Simulation Results}\label{sec:simulation}
We further perform simulations to evaluate the performance of \name\
in large-scale scenarios.  The simulations are designed to answer the
following questions.
\begin{itemize}
\item How does \name\ perform in different scales of networks?
\item How does the packet size affect the throughput performance of
comparison protocols?
\item Can legacy 802.11 devices operate normally in the presence of
\name\ nodes?
\end{itemize}

In each simulation, we uniformly randomly distribute the users in a
disk with center the AP and radius 100~m. Furthermore, the antennas on
the AP are collinear with a gap of 0.05~m between two neighboring
antennas.  The channels are generated according to the Rayleigh fading
channel model, and we assume the available transmission bit-rates are
6, 9, 12, 18, 24, 36, 48, and 54~Mb/s, which are identical with
802.11a. The default number of users is set to 15, and the default
packet size is set to 1500 bytes. The detailed settings will be
specified in each simulation.

\begin{figure}[t!]
\centering
{
\footnotesize
\begin{tabular}{c}
\epsfig{file=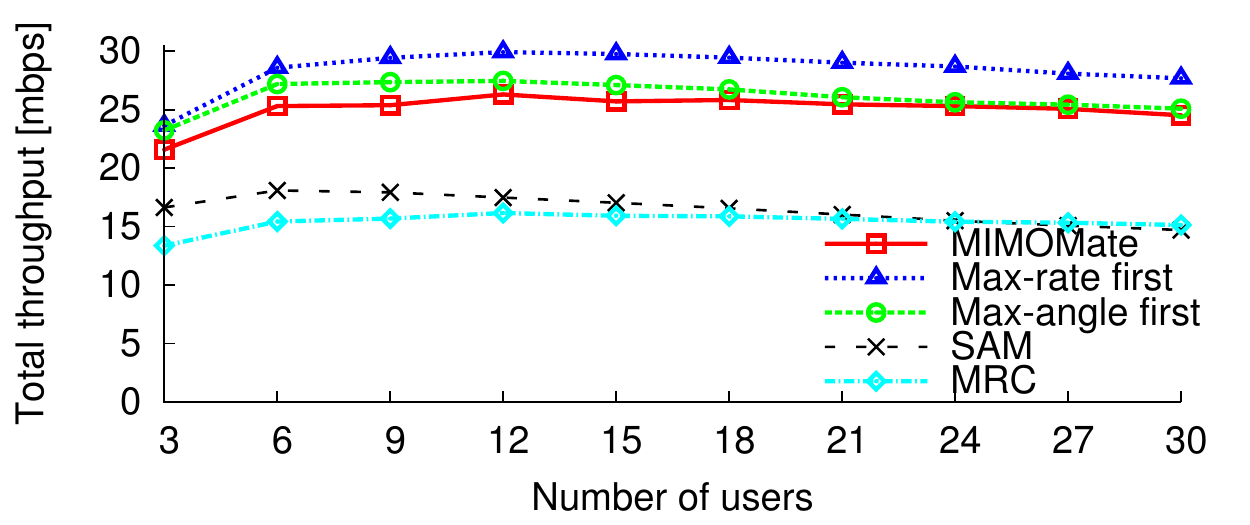, width=3in} \\
(a) Total throughput in the 2-antenna AP scenario  \\
\epsfig{file=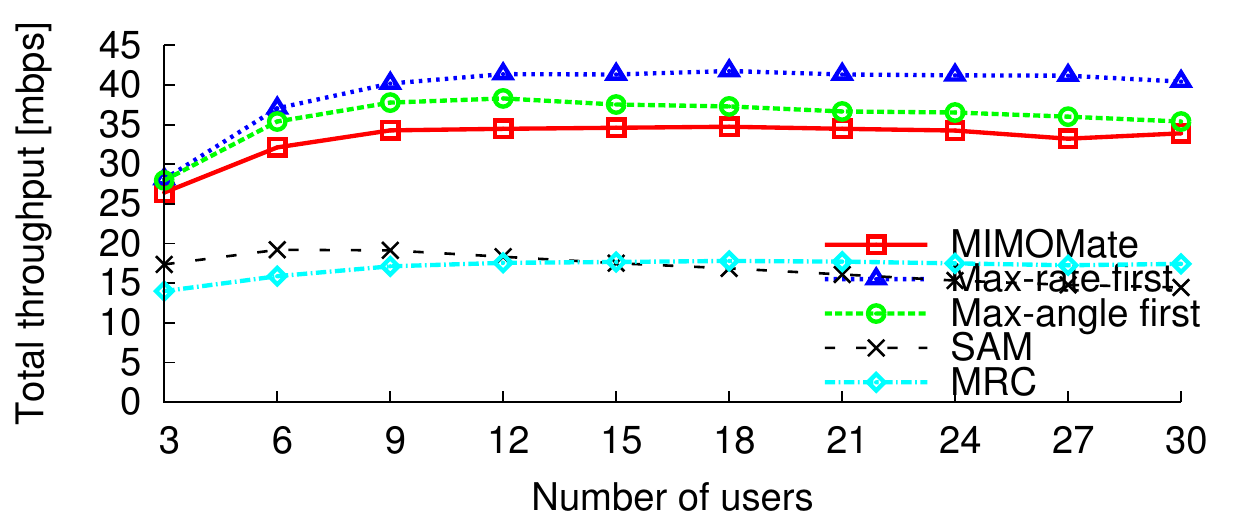, width=3in}\\
(b) Total throughput in the 3-antenna AP scenario
\end{tabular}
}
\caption{Impact of various packet sizes on throughput} \label{fig:pktsize-tput}
\end{figure}

\subsection{Impact of Number of Clients}
We evaluate the performance of \name\ when the number of clients
varies from 3 to 30. Figs.~\ref{fig:usrnum-tput}(a)
and~\ref{fig:usrnum-tput}(b) plot the total throughput for 2-antenna
and 3-antenna AP scenarios, respectively.  The trend of the simulation
results are similar to that of small-scale experimental results.  The
effect of increasing the number of clients on \name, {\em max-angle
first}, and {\em max-throughput first} is relatively small, showing
that \name\ operates well even when the network scales up.  One thing
worth noting is that SAM outperforms MRC when the number of clients is
small, e.g., less than twelve, because its clients do not need to wait
for transmitting concurrently after multiple rounds of contention. The
throughput of SAM however decreases when the number of clients
increases.  The reason is that, although both SAM and MRC require
multiple rounds of contention, contention failure can only occur in
the last round of contention in MRC, yet could happen in any round of
contention in SAM.  The more clients exist, the higher probability
that contention fails is. When contention fails, the number of
concurrent streams would exceed the degrees of freedom, which makes
ZF-SIC decoding fail.  Hence the result.

\subsection{Impact of Dynamic Packet Sizes}
We next evaluate how \name\ performs when the packet size varies
dynamically. In this simulation, we uniformly randomly pick a size
between 200 bytes and 1500 bytes for the first client of each
transmission. The clients joining later end their transmissions at the
same time with the first stream.  Again,
Figs.~\ref{fig:pktsize-tput}(a) and \ref{fig:pktsize-tput}(b) plot the
total throughput for 2-antenna and 3-antenna AP scenarios,
respectively.  Due to a smaller average packet size, i.e., around
$(200+1500)/2$ bytes, and, as a result, a higher proportion of airtime
occupied by the overhead, the throughput in this simulation is less
than those found in Fig.~\ref{fig:usrnum-tput}; However, the advantage
of \name\ does not get affected when the packet size changes.
Observer that, compared to Figs.~\ref{fig:usrnum-tput}(a)
and~\ref{fig:usrnum-tput}(b), the throughput of SAM decreases slower
here.  This is because, in SAM, when the packet size is small, then
there might be no remaining airtime for clients to exploit the
transmission opportunities of high dimensions. In other words, the
number of contentions for SAM in this simulation is less than that in
the previous simulation. As a result, less contention failures occur
here. Therefore, compared with the previous simulation, the effect of
increasing the number of users on SAM is less here.

\begin{figure}[t!]
\centering
{
\footnotesize
\begin{tabular}{c}
\epsfig{file=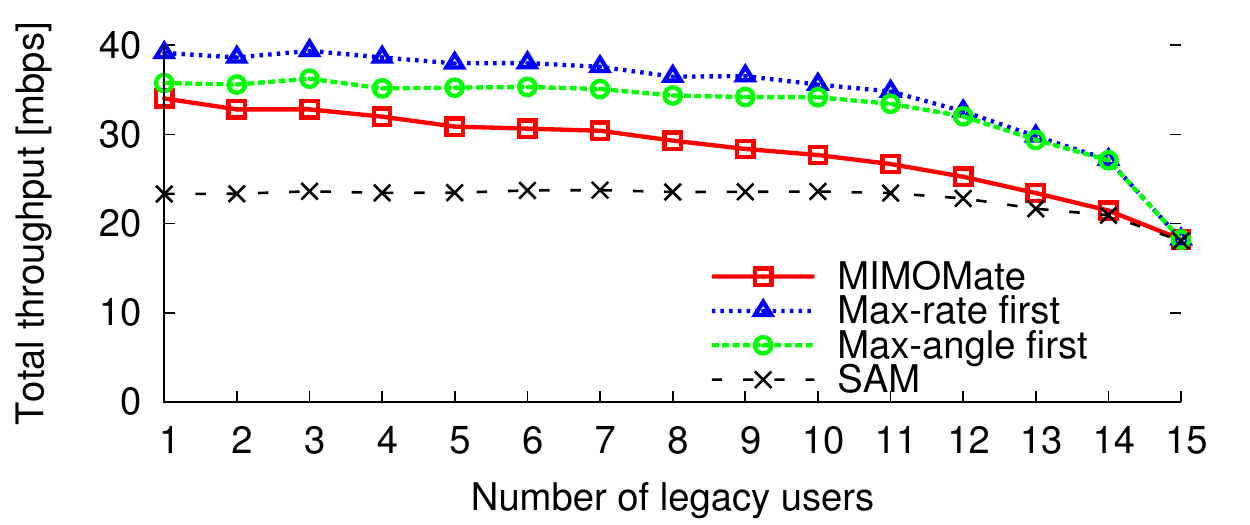, width=3in} \\
(a) Total throughput in the 2-antenna AP scenario  \\
\epsfig{file=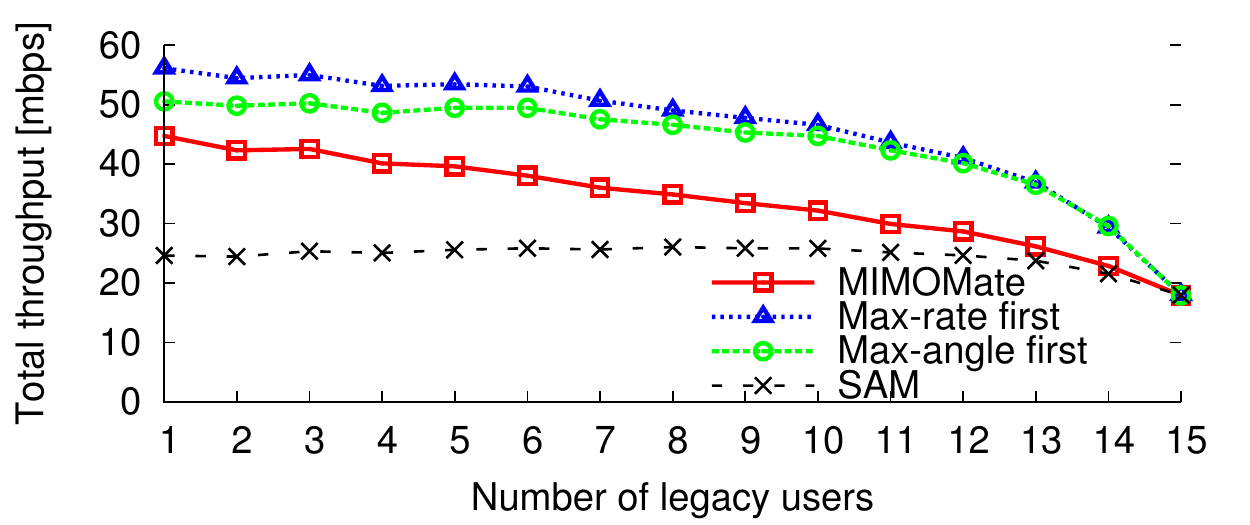, width=3in}\\
(b) Total throughput in the 3-antenna AP scenario
\end{tabular}
}
\caption{Impact of number of legacy nodes on throughput}
\label{fig:legacynum-tput}
\end{figure}

\subsection{Impact of Existence of Legacy Devices}
We finally check the performance of \name\ in the presence of legacy
nodes. Since MRC is not compatible with the traditional 802.11
standard, we exclude it from this simulation.  In this simulation, we
set the total number of clients at 15, and let some of nodes be legacy
802.11 clients. Figs.~\ref{fig:legacynum-tput}(a) and
\ref{fig:legacynum-tput}(b) show the total throughput when the number
of legacy nodes varies from 1 to 15.  The results show that, in
\name\, {\em max-angle first} and {\em max-throughput first}, the
total throughput decreases as the number of legacy nodes increases.
The first reason is that, since legacy nodes can only send the first
stream, i.e., occupying the first dimension, the probability of
picking concurrent clients with a good channel decreases when the
number of non-legacy nodes decreases. Second, when the number of
non-legacy clients is less than the degrees of freedom, the concurrent
transmission opportunities might not be able to be fully utilized, as
a result reducing the throughput significantly.  When all the users
are legacy users, i.e., the number of legacy users is fifteen, all the
methods degenerate to the traditional 802.11 protocol and hence
perform the same.  The performance of SAM however does not change much
with various numbers of legacy users because it simply randomly picks
clients to fully utilize the available degrees of freedom, without
considering channel orthogonality between concurrent clients. On the
contrary, the performance of SAM increases slightly when the number of
non-legacy nodes decreases, because the probability of collisions due
to contention failure decreases when more nodes join contention.

\begin{figure}[t!]
\centering
{
\footnotesize
\begin{tabular}{c}
\epsfig{file=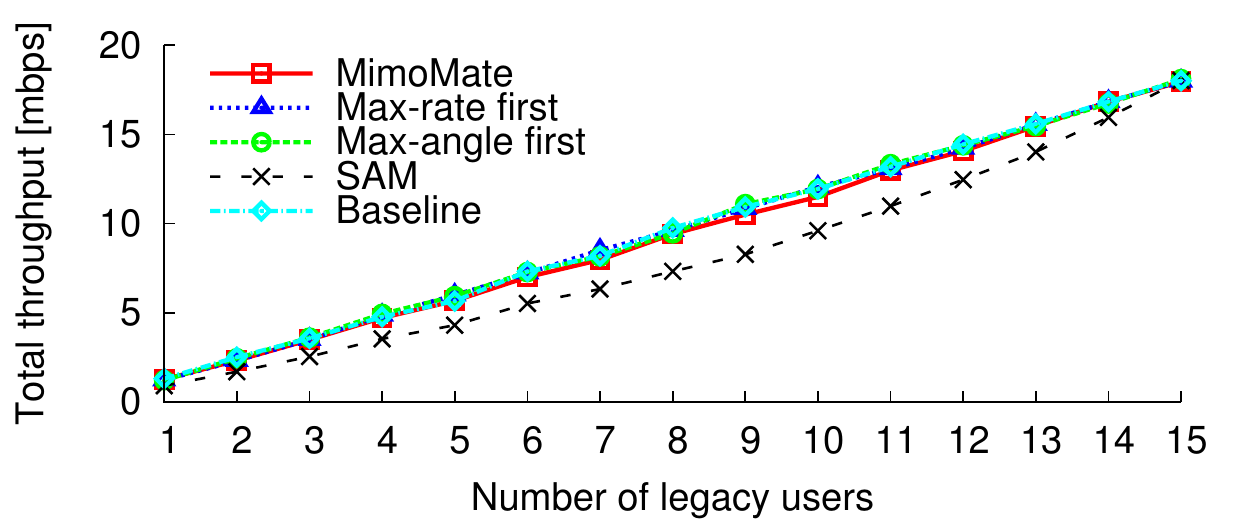, width=3in} \\
(a) Throughput of legacy nodes in the 2-antenna AP scenario  \\
\epsfig{file=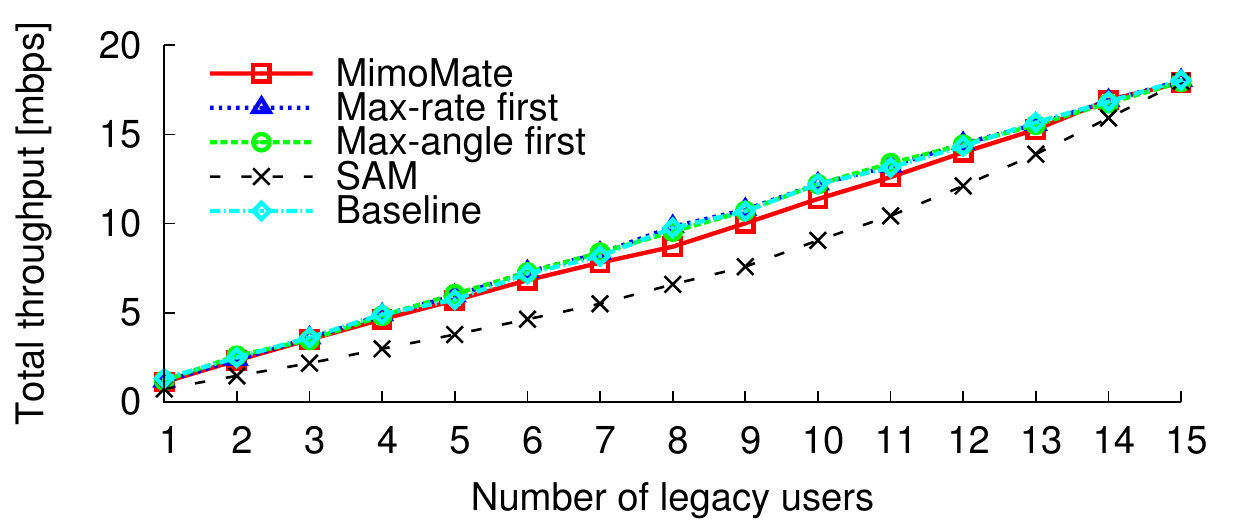, width=3in}\\
(b) Throughput of legacy nodes in the 3-antenna AP scenario
\end{tabular}
}
\caption{Average throughput of legacy devices}
\label{fig:legacytput}
\end{figure}

We further check whether the throughput performance of legacy
	nodes gets affected by our matching design.
	Figs.~\ref{fig:legacytput}(a) and~\ref{fig:legacytput}(b) compare
	the total throughput of legacy nodes in comparison schemes with
	that in the traditional 802.11 protocol. The results show that
	their throughput in SAM is significantly worse than that in
	traditional 802.11. This is because SAM suffers from contention
	failures that might happen in the second and third streams. The
	figure also shows that legacy nodes can coexist with \name\ nodes
	well and achieve a similar performance, as compared to that in
	traditional 802.11.  A small gap between \name\ and traditional
	802.11 is due to occasional contention failures that might happen
	when contention is required for some first winners who are not
assigned any follower.

\section{Conclusion}\label{sec:conclusion}
This paper introduces \name, a user matching protocol that maximally
delivers the gain of a MU-MIMO LAN, while, at the same time, ensuring
all the clients to have a fair opportunity to transmit concurrent
packets. The clients scheduled as the {\em MIMO-Mates} can join
concurrent transmissions one after another with only one contending
process, as a result reducing the MAC overhead significantly. We also
integrate \name\ with an angle-based contention mechanism to best
utilize concurrent transmission opportunities when any of the
scheduled {\em MIMO-Mates} does not have traffic to transmit.  Our
prototype implementation shows that \name\ increases the throughput by
42\% and 52\% over the contention-based protocol for 2- and 3-antenna
AP scenarios, respectively, and also provides fairness for clients.
Our theoretical analysis also proved that \name\ can always achieve a
higher or equal throughput as compared to the traditional contention
scheme in the 2-antenna AP scenario. Analytic performance evaluation
for any general scenarios is considered as our future work.

\appendix
\subsection{Proof of the NP-hardness of the 3-{\em MIMOMate} Problem}
\begin{thrm}
\label{Thrm: MF3_Is_NP}
The 3-{\em MIMOMate} problem is NP-hard.
\end{thrm}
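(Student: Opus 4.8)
The plan is to prove NP-hardness by a polynomial-time reduction from the Maximum 3-Dimensional Matching (3DM) problem, whose decision version is NP-complete~\cite{GareyJohnson}: given three pairwise-disjoint sets $X,Y,Z$ of equal size $q$ and a set $T\subseteq X\times Y\times Z$, decide whether $T$ contains a matching of size $q$ (one in which every element of $X\cup Y\cup Z$ is covered exactly once). I would show that any algorithm solving the 3-\emph{MIMOMate} problem can be used to compute the maximum 3DM matching size, and hence to answer this decision question.

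Given a 3DM instance $(X,Y,Z,T)$, I would construct a 3-\emph{MIMOMate} instance on the client set $V=X\cup Y\cup Z$ (the union is disjoint by assumption). The throughputs are used purely to encode which combinations are feasible: for every ordered triple $(u,v,w)$ of distinct clients I set $r^{(u,v,w)}_v=r^{(u,v,w)}_w=1$ whenever $(u,v,w)\in T$ (so in particular $u\in X$, $v\in Y$, $w\in Z$), and $r^{(u,v,w)}_v=r^{(u,v,w)}_w=0$ otherwise. This assignment is computable in $O(|V|^3)$ time, hence polynomial. By Constraint~\ref{Cndtn: 1-3mimomate} the triples that may appear in a solution $M$ are exactly the elements of $T$, and since each feasible triple contributes the same weight $r^{(u,v,w)}_v+r^{(u,v,w)}_w=2$, Constraint~\ref{Cndtn: 5-3mimomate} becomes vacuous: every maximum-cardinality solution attains total weight $2|M|$.

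The heart of the argument is to show that feasible 3-\emph{MIMOMate} matchings on this instance are in exact correspondence with 3DM matchings, so that the maximum $|M|$ equals the maximum 3DM matching size. One direction is immediate: any 3DM matching is a set of $T$-triples that pairwise disagree in all three coordinates, which satisfies Constraints~\ref{Cndtn: 2-3mimomate}--\ref{Cndtn: 3-3mimomate}. For the converse I would exploit the disjointness of $X,Y,Z$: because every feasible triple lies in $X\times Y\times Z$, an element of $X$ can occur only in the leader position, an element of $Y$ only in the second position, and an element of $Z$ only in the third. Hence Constraint~\ref{Cndtn: 3-3mimomate}, which forbids repetition only within the same coordinate, already guarantees that no client is reused anywhere in the matching---exactly the 3DM requirement. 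Consequently the 3DM instance admits a size-$q$ matching if and only if the optimal 3-\emph{MIMOMate} solution satisfies $|M|=q$, and NP-hardness follows.

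I expect the main subtlety to be precisely this last point. Unlike 3DM, the 3-\emph{MIMOMate} problem is stated over a single client set, so a priori a client could serve as a leader in one relation and as a follower in another, producing matchings with no 3DM analogue; the reduction must rule this out. The disjoint-set encoding is what closes this gap, forcing each client into a single coordinate position and thereby making the weaker per-coordinate Constraint~\ref{Cndtn: 3-3mimomate} coincide with global element-disjointness. Once this equivalence and the polynomial time bound are established, NP-hardness is immediate; the same construction, with triples replaced by $N$-tuples, reduces $N$-dimensional matching to $N$-\emph{MIMOMate} for any $N\ge 3$.
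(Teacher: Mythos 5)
Your proposal is correct and is essentially the paper's proof: both reduce from 3-dimensional matching by taking the client set to be the disjoint union of $X$, $Y$, $Z$ and using $0$/$1$ throughputs to encode membership in $T$, with the disjointness forcing each client into a single coordinate position so that the per-coordinate Constraint~\ref{Cndtn: 3-3mimomate} coincides with global element-disjointness. The only cosmetic difference is how Constraint~\ref{Cndtn: 5-3mimomate} is dispatched---the paper first relaxes to the cardinality-only 3MF problem and proves that NP-hard, while you observe that uniform weights make the constraint vacuous on the constructed instance---which is arguably a slightly cleaner packaging of the same argument.
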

\begin{proof}
We start our proof by relaxing the 3-{\em MIMOMate} problem to a
simpler problem, denoted by the maximum fairness (3MF) problem, which
finds a set $M$ such that Constraints~\ref{Cndtn:
1-3mimomate}--\ref{Cndtn: 4-3mimomate} in problem~\ref{Probdefi:
3mimomate} are satisfied.  Any optimal solution of the 3MF problem is
hence a feasible solution of our problem.  We then proceed the proof
by showing that even the decision version of the relaxed 3MF problem
is NP-hard.

Let 3MF$(V, r, k)$ denote an instance of the decision version of the
3MF problem, which seeks a matching $M \subseteq V \times V$ with
$\vert M \vert \ge k$. We prove its NP-hardness by polynomial-time
reduction from the 3-dimensional matching problem, which is NP-hard.
Let 3DM$(X,Y,Z,T,k)$ denote the decision version of the 3-dimensional
matching problem, which finds a matching $M \subseteq T$, where $T
{\subseteq} X {\times} Y {\times} Z$, such that $u_1{\neq}u_2,
v_1{\neq}v_2$, and $w_1{\neq}w_2$ for any  $(u_1, v_1, w_1), (u_2,
v_2, w_2) \in M$ and $\vert M \vert \ge k$.  For every instance
$3DM(X,Y,Z,T,k)$, we can construct an instance 3MF$(V,r,k)$ in
polynomial time as follows: 

\begin{enumerate} 
\item $V = \{u_i \vert 1 \le i \le \vert X \vert + \vert Y
	\vert + \vert Z \vert\}$.  
\item Set $r^{(u_i,u_j,u_k)}_{u_j} + r^{(u_i,u_j,u_k)}_{u_k}=1,
	\forall (u_i, u_j, u_k)$, if  $(x_i, y_{j-|X|}, z_{k-|X|-|Y|}) \in
	T$; otherwise, set it to 0.
\end{enumerate} 

We show that 3DM$(X,Y,Z,T,k)$ has a feasible solution $M$ if, and only
if, 3MF$(V, r, k)$ has a feasible solution $M'$.  For the ``only if''
direction, for all the elements $(x_i, y_j, z_k)$ in $M$, we add
$(u_i, u_{j+|X|}, u_{k+ |X| + |Y|})$ to the solution of 3MF$(V,r,k)$,
$M'$.  This solution is feasible because, by the construction of the
instance, $r^{(u_i, u_{j+|X|}, u_{k+|X|+|Y|})}_{u_{j+|X|}}+r^{(u_i,
u_{j+|X|}, u_{k+|X|+|Y|})}_{u_{k+|X|+|Y|}} = 1$, and thus Constraints
1 and 3 of the 3MF problem hold.  Constraint 2 is also satisfied,
i.e., $u_i, u_{j+|X|}$ and $u_{k+|X|+|Y|}$ are distinct clients,
because $i \le \vert X \vert < j{+}\vert X \vert \le \vert X
\vert{+}\vert Y \vert < k{+}\vert{X}\vert{+}\vert Y \vert$. Finally,
since different elements $(x_i, y_j, z_k)$ map to different $(u_i,
u_{j+\vert X \vert}, u_{k+ \vert X \vert + \vert Y\vert})$, then
$\vert M' \vert = \vert M \vert \ge k$.

For the ``if'' direction, for every $(u_i, u_j, u_k) \in M'$, we add
$(x_i, y_{j-\vert X \vert}, z_{k - \vert X \vert - \vert Y \vert})$ to
the solution of 3MF, $M$.  Since $(u_i, u_j, u_k)$ in $M'$, we have
$r^{(u_i, u_j, u_k)}_{u_j}+ r^{(u_i, u_j, u_k)}_{u_k}= 1$. Thus, by
the construction of the instance, $(x_i, y_{j-\vert X \vert}, z_{k -
\vert X \vert- \vert Y \vert}) \in T$.  In addition, by
Constraint~\ref{Cndtn: 3-3mimomate}, the constraint of the 3DM problem
that restricts $u_1 {\neq} u_2, v_1 {\neq} v_2$, and $w_1 {\neq} w_2$,
for any two distinct elements $(u_1, v_1, w_1), (u_2, v_2, w_2) \in
T$, holds.  Finally, again, since $\vert M \vert{=}\vert M' \vert$, $M
\ge k$ holds as well.  Hence, we conclude that the 3FM problem is also
NP-hard.
\end{proof}
\subsection{Proofs of Theorem~\ref{thrm: mimomate_vs_contention} 
and Theorem~\ref{thrm: mimomate_vs_random}}
Since Theorem~\ref{thrm: mimomate_vs_random} implies 
Theorem~\ref{thrm: mimomate_vs_contention}, we only need to show 
Theorem~\ref{thrm: mimomate_vs_random}. Furthermore, it is sufficient 
to show that the average throughput 
of the output of Problem~\ref{Probdefi: 2mimomate}, denoted as $T_M$, 
is greater than or equal to that of the optimal fair probabilistic assignment 
selection (the fair probabilistic assignment selection that achieves the 
highest average throughput), denoted as $T_R$.
For ease of presentation, we give an order to the clients, 
so that the input $V$ in Problem~\ref{Probdefi: 2mimomate}, 
i.e., the set of all clients, is $\{1, 2, ..., |V|\}$. 
Note that since the throughput $r^{(i,j)}_j$ is greater than 0 
for all $i,j \in V, i \neq j$, i.e., ZF-SIC decoding is successful, 
a client who transmits the first stream can get about the same 
throughput no matter who its follower is~\cite{TurboRate}. 
In addition, every client has an equal probability 
of winning the first contention. Therefore, we assume that the throughputs 
contributed by the first stream are the same in both the output of  
Problem~\ref{Probdefi: 2mimomate} and the optimal fair probabilistic assignment. 
Thus, we can ignore the average throughput contributed by the first stream 
in the following proof, since we only need to show $T_M \geq T_R$. 

To derive the average throughput, we introduce a variable $p^{i, j}$ for all 
$i, j \in V, i \neq j$. $p^{i, j}$ represents the probability that client $j$ 
is chosen as client $i$'s follower. Therefore, given $p^{i, j}$s, 
the average throughput can be expressed as follows.
\begin{align*}
  \sum_{i \in V}{\{Pr\{\text{client }i \text{ wins the first contention}\}
                \sum_{j \in V \setminus \{i\}}{p^{i, j}r^{(i, j)}_j}\}}     
\\ = \frac{1}{|V|}\sum_{i \in V}{\sum_{j \in V \setminus \{i\}}
                                 {p^{i, j}r^{(i, j)}_j}}.
\end{align*}
Now, we are ready to derive $T_M$ and $T_R$. 
Denote $p_M^{i,j}$s and $p_R^{i,j}$s as the $p^{i, j}$s used in the output of 
Problem~\ref{Probdefi: 2mimomate} and the optimal fair probabilistic assignment, 
respectively. It is then sufficient to show that
\begin{equation}
\label{eq: goal}
\sum_{i \in V}{\sum_{j \in V \setminus \{i\}}{p_M^{i, j}r^{(i, j)}_j}} \geq
\sum_{i \in V}{\sum_{j \in V \setminus \{i\}}{p_R^{i, j}r^{(i, j)}_j}}.
\end{equation}

The following proof is done by three steps: \textbf{1)} show that $p_M^{i, j}$s
is an optimal solution of the integer programming of the bipartite
maximum weighted matching problem; \textbf{2)} show that any $p_R^{i, j}$s is a
feasible solution of the relaxed integer programming of the bipartite
maximum weighted matching problem, which allows the integer variables
to be relaxed as any real number between $[0, 1]$; \textbf{3)} 
Start from the fact that the achieved objective value
of $p_M^{i, j}$s in the integer programming is greater than or equal
to that of $p_R^{i, j}$s in the relaxed integer programming for the
maximum weight matching problem~\cite{matching_theory} and show that 
Eq.~\eqref{eq: goal} holds.

\vskip 0.05in
\noindent {\bf Step 1:} 
Note that $p_M^{i,j}$s are either 0 or 1, indicating whether to use
client $j$ as client $i$'s follower or not.  First observe that since
$r^{(i, j)}_j$ is greater than $0$ for all $i, j \in V, i \neq j$, the
size of the output of Problem~\ref{Probdefi: 2mimomate}, $|M|$, must
be $|V|$.  In other words, every client has a different follower,
which implies
\begin{equation}
\label{eq: property_of_P1}
\sum_{j \in V \setminus \{i\}}{p_M^{i, j}} = 1, \forall i \in V.
\end{equation}
Then, recall that Problem~\ref{Probdefi: 2mimomate} is actually a
bipartite maximum weighted maximum cardinality matching problem.  The
book~\cite{LEDA} shows that, to solve the bipartite maximum weighted
maximum cardinality matching problem, we can add a sufficiently large
number, $C$, to the weight of each edge, and solve the bipartite
maximum weighted matching problem on the new graph instead. 
We can now use the integer programming of the bipartite maximum 
weighted matching problem to find $p_M^{i,j}$s. 
We first give a detailed construction of the 
bipartite graph. Given the set $V$ and throughputs 
$r^{(i, j)}_j$s in Problem~\ref{Probdefi: 2mimomate}, we construct a bipartite 
graph $G = (V_1 \cup V_2, E)$, where\\ 
$V_1 = \{v_1^1, v_1^2, ..., v_1^{|V|}\}$ (the set of winners of the first contention),\\ 
$V_2 = \{v_2^1, v_2^2, ..., v_2^{|V|}\}$ (the set of followers),\\
$E = \{(v_1^i, v_2^j)| r^{(i,j)}_j > 0\}$, and\\
$w(v_1^i, v_2^j) = r^{(i,j)}_j + C, \forall (v_1^i, v_2^j) \in E$ 
(the increased throughput).\\
The integer programming of the bipartite maximum weighted matching problem on $G$ 
can then be formulated as 
\begin{alignat*}{3}
& \text{maximize}   \quad && \sum_{(v_1^i, v_2^j) \in E}{w(v_1^i, v_2^j)p^{i, j}} 
                                        \quad &        \\
& \text{subject to} \quad && \sum_{j \in V \setminus \{i\}}{p^{i, j}} \leq 1,  \forall i \in V, 
                    \\
& \quad &&\text{(the maximum number of edges incident to }v_1^i\text{ is 1)}\\
&                   \quad && \sum_{i \in V \setminus \{j\}}{p^{i, j}}
					\leq 1, \forall j \in V,
					\\
& \quad &&\text{ (the maximum number of edges incident to }v_2^j\text{ is 1)}\\ 
&                   \quad && p^{i, j} = 0 \text{ or } 1, \forall i,j \in V, i \neq j.
\end{alignat*}
The optimal $p^{i,j}$s of the above integer programming are then $p_M^{i,j}$s.

\vskip 0.05in
\noindent {\bf Step 2:} 
To show that $p_R^{i, j}$s correspond to a feasible solution of 
the relaxed integer programming of the bipartite maximum 
weighted matching problem, we need to derive some properties of $p_R^{i, j}$s.
The fairness constraint requires that 
every client has the same probability to transmit the second stream. 
We hence have the following equation. 
\begin{equation}
\label{eq: property_of_ND_1}
\sum_{i \in V \setminus \{j\}} p_R^{i, j} 
= \sum_{i \in V \setminus \{j'\}} p_R^{i, j'} \leq 1, 
\forall j, j' \in V. 
\end{equation}
The inequality must follows. Otherwise, $\sum_{i \in V}\sum_{j \in V
\setminus \{i\}} p_R^{i, j} = \sum_{j \in V}\sum_{i \in V \setminus
\{j\}} p_R^{i, j} > |V|$, which implies $\sum_{j \in V \setminus
\{i\}} p_R^{i, j} > 1$ for some $i$ and contradicts to the fact that
any fair probabilistic assignment chooses at most one follower at each
time, i.e.,
\begin{equation} 
\label{eq: property_of_ND_2}
\sum_{j \in V \setminus \{i\}} p_R^{i, j} \leq 1, \forall i \in V.
\end{equation} 
Then, by Eq.~\eqref{eq: property_of_ND_1} and 
Eq.~\eqref{eq: property_of_ND_2}, we get that 
$p_R^{i, j}$s correspond to a feasible
solution of the relaxed integer programming.

\vskip 0.05in
\noindent {\bf Step 3:} 
It has been shown in~\cite{matching_theory} that,
for the relaxed integer programming of the bipartite maximum weighted 
matching problem, the objective value achieved
by an optimal integral solution is no less than that achieved by
any feasible solution of the relaxed integer programming. 
Hence, we have $\sum_{(v_1^i, v_2^j) \in E}{w(v_1^i, v_2^j)p_M^{i, j}} \geq 
\sum_{(v_1^i, v_2^j) \in E}{w(v_1^i, v_2^j)p_R^{i, j}}$. 
Therefore, it is sufficient to show that 
\begin{equation} \nonumber
\sum_{(v_1^i, v_2^j) \in E}{w(v_1^i, v_2^j)p_M^{i, j}} \geq 
\sum_{(v_1^i, v_2^j) \in E}{w(v_1^i, v_2^j)p_R^{i, j}} 
\text{ implies Eq.~\eqref{eq: goal}}.
\end{equation}
Before showing this, we must further refine Eq.~\eqref{eq: property_of_ND_2} 
and get the following equation.
\begin{equation} 
\label{eq: property_of_ND_3}
\sum_{j \in V \setminus \{i\}} p_R^{i, j} = 1, \forall i \in V,
\end{equation}
which can be proved by contradiction.
If there exists some $i' \in V$, such that 
$\sum_{j \in V \setminus \{i'\}} p_R^{i', j} < 1$, then we must have 
$\sum_{i \in V \setminus \{j\}} p_R^{i, j} 
= \sum_{i \in V \setminus \{j'\}} p_R^{i, j'} < 1, \forall j, j' \in V$;
otherwise, if $\sum_{i \in V \setminus \{j\}} p_R^{i, j} 
= \sum_{i \in V \setminus \{j'\}} p_R^{i, j'} = 1, \forall j, j' \in V$, 
then $\sum_{j \in V \setminus \{i\}} p_R^{i, j} = 1, \forall i \in V$, 
which contradicts to $\sum_{j \in V \setminus \{i'\}} p_R^{i', j} < 1$. 
Therefore, we can add a small value to 
all $p_R^{i', j}$s, $j \in V \setminus \{i'\}$, such that 
Eq.~\eqref{eq: property_of_ND_1} and Eq.~\eqref{eq: property_of_ND_2} 
still hold, i.e., after addition, 
$p_R^{i, j}$s still correspond to a fair probabilistic assignment. 
Obviously, the average throughput is higher 
after we increase $p_R^{i', j}$s, which contradicts to the fact that 
$p_R^{i, j}$s correspond to an optimal fair probabilistic assignment. 
Hence, Eq.~\eqref{eq: property_of_ND_3} holds.

We are now ready to accomplish the final part of the proof.
Observe that 
\begin{align*}
	&\sum_{(v_1^i, v_2^j) \in E}{w(v_1^i, v_2^j)p_M^{i, j}} 
=\sum_{v_1^i \in V_1}{\sum_{v_2^j \in V_2, j \neq i}
                                 {w(v_1^i, v_2^j)p_M^{i, j}}} \\
=&\sum_{i \in V}{\sum_{j \in V \setminus \{i\}}
{(r^{(i, j)}_j + C)p_M^{i, j}}} \\ 
=&\sum_{i \in V}{\sum_{j \in V \setminus \{i\}}
                                 {r^{(i, j)}_j p_M^{i, j}}} 
+C\sum_{i \in V}{\sum_{j \in V \setminus \{i\}}
                                 {p_M^{i, j}}},                       
\end{align*} 
and 
\begin{equation}\nonumber
 \sum_{(v_1^i, v_2^j) \in E}{w(v_1^i, v_2^j)p_R^{i, j}} 
=\sum_{i \in V}{\sum_{j \in V \setminus \{i\}}
                                 {r^{(i, j)}_j p_R^{i, j}}} 
+C\sum_{i \in V}{\sum_{j \in V \setminus \{i\}}
                                 {p_R^{i, j}}} 
\end{equation} 
by a similar reasoning.
Therefore, it is sufficient to show that 
$\sum_{i \in V}{\sum_{j \in V \setminus \{i\}}{p_M^{i, j}}}
= \sum_{i \in V}{\sum_{j \in V \setminus \{i\}}{p_R^{i, j}}}$.
By Eq.~\eqref{eq: property_of_P1} and Eq.~\eqref{eq: property_of_ND_3}, 
we have $\sum_{i \in V}{\sum_{j \in V \setminus \{i\}}{p_M^{i, j}}}
= \sum_{i \in V}{\sum_{j \in V \setminus \{i\}}{p_R^{i, j}}} = |V|$.
The proof is then completed.
\end{document}